\newlength{\dhatheight}
\providecommand\bnabla{\bm{\nabla}}
\providecommand\der{{d}}
\renewcommand{\vec}[1]{\ensuremath{\mathbf{#1}}} 
\providecommand\bnabla{\gvec{\nabla}}
\newtheorem{thm}{Theorem}
\newtheorem{lem}[thm]{Lemma}
\renewcommand{\bar}[1]{\overline{#1}}
\newcommand{\aderv}[2]{\frac{\partial {#1}}{\partial {#2}}}
\newcommand{\adervo}[2]{\frac{\der {#1}}{\der {#2}}}
\newcommand{\adervs}[2]{\frac{\partial^2 {#1}}{\partial {#2}^2}}
\newcommand{\adervso}[2]{\frac{\der^2 {#1}}{\der {#2}^2}}
\newcommand{\equa}[1]{Eq.~(\ref{#1})}
\newcommand{\equas}[1]{Eqs.~(\ref{#1})}
\newcommand{\equasa}[2]{Eqs.~(\ref{#1}){ }and{ }(\ref{#2})}
\newcommand{\ket}[1]{\big| #1 \big>} 
\newcommand{\bra}[1]{\big< #1 \big|} 
\newcommand{\braket}[2]{\big< #1 \vphantom{#2} \big|
  #2 \vphantom{#1} \big>} 
\newcommand{\eqn}[2]{\begin{gather}
#1
\label{#2}
\end{gather}
}
\newcommand{\spl}[2]{\begin{gather}
\displaybreak[2]
\begin{split}
#1
\end{split}
\label{#2}
\end{gather}
}
\newcommand{\gat}[2]{\begin{subequations}\label{#2}\begin{gather}
#1
\end{gather}\end{subequations}
}
\numberwithin{equation}{section}
\title{\bf An Introduction to Quantum Mechanics\\
{\large \ldots\ \emph{for those who dwell in the macroscopic world}}}
\author{\textbf{\em Antonio Barletta}  \\[-3pt]
	Department of Industrial Engineering \\[-4pt] 
Alma Mater Studiorum Universit\`a di Bologna  
	}
\date{---\ Lecture Notes\ ---\\{\small \today}}
\begin{document}

\maketitle

\vspace{0.5cm}

\section*{Foreword}
There is a huge number of excellent and comprehensive textbooks on quantum mechanics. They mainly differ for the approach, more or less oriented to the formalism rather than to the phenomenology, as well as for the topics covered. These lectures have been based to a large extent on the classical textbook by \citet{Gasiorowicz1974}. I must confess that the main reason for my choice of  \citet{Gasiorowicz1974} is affective, as it was the textbook where I first learned the basic principles of quantum mechanics. Beyond my personal taste, I now recognize that \citet{Gasiorowicz1974} is still a very good textbook on quantum mechanics, with a rigorous theoretical approach accompanied by a wide collection of applications. If the textbook by Gasiorowicz was my main basis, I have taken much also from other textbooks such as \citet{Phillips2003}, as well as from the excellent classical treatise by \citet{Dirac1981}. In order to avoid complications in the mathematics and in the notation, the topic is presented in these notes with reference to one--dimensional systems, with just a few marginal extensions to the three--dimensional formulation. 

\tableofcontents

\vspace{2cm}

\vfill\eject
\section{The origin of quantum physics}
A fundamental concept of classical physics is the particle, a point--like mass that moves along a trajectory in the three--dimensional space. The position and the instantaneous velocity of the particle can be determined with an arbitrarily high precision at every time, and their evolution is perfectly predictable from the equations of motions. The position and the velocity as functions of time can be known if one fixes the initial state of the particle, and if one knows the forces acting on the particle.

Another fundamental concept of classical physics is the wave. The electromagnetic field is a wave: an entity not localized in a specified position and with a velocity of propagation, also called phase velocity.

Waves and particles are independent paradigms of classical physics: a particle is not a wave and a wave is not a particle. As suggested by \citet{Phillips2003}, ``\emph{the laws of particle motion account for the material world around us and the laws of electromagnetic fields account for the light waves which illuminate this world}''.

This conception of natural world started to change at the beginning of twentieth century, as new experimental evidence was provided by the study of the black--body radiation, of the photoelectric effect, of the Compton effect, of the diffraction and interference properties of electrons, and of the spectra of atomic radiation.

\subsection{The black--body radiation} 
A result well--known from the studies initiated by Gustav Kirchhoff in 1859, is that the intensity of electromagnetic radiation emitted by a perfect absorber (\emph{black body}) is a universal function of the temperature $T$ and of the frequency of radiation $\nu$. In practice, the black--body radiation is the radiation from a small hole in an isothermal cavity. The approach of classical physics provided a theoretical formula to express the energy density (energy per unit volume) of the radiation inside the cavity in the frequency range $\nu$ to $\nu + d\nu$, 
\eqn{
u (\nu, T)\ d \nu = \frac{8\pi \nu^2}{c^3}\ k T\ d \nu ,
}{1}
where $k = 1.3807 \times 10^{-23} J/K$ is Boltzmann constant, and $c = 2.998 \times 10^8 m/s$ is the speed of light in vacuum. \equa{1} is now well--known as the Rayleigh--Jeans law. This law has the unphysical feature that, when integrated over all the possible frequencies, yields a diverging result. Moreover, it turns out to be in agreement with experimental data only for low frequencies. A completely different approach was needed to get a satisfactory agreement with experimental data, and this was done by Max Planck in 1900, who obtained
\eqn{
u (\nu, T)\ d \nu = \frac{8\pi h}{c^3}\ \frac{\nu^3}{e^{h\nu/(k T)} - 1}\ d \nu .
}{2}
\equa{2} contains a new constant of physics, \emph{Planck's constant},
\eqn{
h = 6.6261 \times 10^{-34} J s .
}{3}
One can easily check, by a Taylor expansion with small values of $h\nu/(k T)$, that \equa{2} coincides with \equa{1} at low frequencies. \equa{2} displays an excellent agreement with the experimental results, it can be integrated over all the frequencies with a converging result, but what is more interesting is that \equa{2} was obtained by invoking assumptions beyond the framework of classical physics. Planck assumed that radiation could be absorbed and emitted in ``quanta'' of energy given by $h \nu$. The electromagnetic radiation is treated as a gas of particles instead of a wave field. These particles, the quanta, are the photons. The same physical entity is treated as simultaneously having a wave and a particle character. Planck's hypothesis of photons is what we now call the wave--particle duality of electromagnetic radiation. 

\subsection{The photoelectric effect}
The photoelectric effect was discovered by Heinrich Hertz in 1887. This effect consists in the emission of electrons from a polished metal plate when irradiated by electromagnetic waves. The emission of electrons may take place (or not) depending on the frequency of the electromagnetic radiation. There exists a threshold frequency for the emission that depends on the metal. When the threshold is exceeded, the emission takes place and one has an electric current that increases with the intensity of the electromagnetic radiation.

What is incomprehensible in the framework of classical electromagnetism is that electrons may be emitted or not depending on the frequency of the radiation. The classical electromagnetism assigns an energy to the radiation that depends on the amplitude, but not on the frequency, of the waves. One can understand that the energy of the radiation incident on the metal plate may be transferred to the electrons trapped in the metal, so that the electrons may be extracted when a sufficient energy is received. However, on the basis of classical electromagnetism, one expects a threshold condition for the photoelectric effect to be formulated in terms of the radiation intensity and not of the radiation frequency. Let us call $W$ (the \emph{work function}) the energy required for an electron to escape the metallic bond, i.e. to separate the electron from the lattice of positively charged nuclei. Then, Planck's hypothesis of quanta allows one to express the kinetic energy of an escaping electron as
\eqn{
E_{kin} = h \nu - W .
}{4}
The threshold condition is thus $\nu > W/h$. The use of \equa{4} is the basis of Einstein's theoretical analysis of the photoelectric effect. Robert A. Millikan, skeptical about the theory of quanta, performed in 1915 the experiment that confirmed the validity of \equa{4}. He found that $W$ was of the order of several electron volts ($1\ eV = 1.602 \times 10^{-19} J$).

\subsection{Wave behaviour of particles and electron diffraction}
If the electromagnetic waves are also particles, the photons, is it conceivable to think of electrons as waves? This extrapolation of the wave--particle duality was the starting point of Louis de Broglie studies leading to the assumption that a particle having a momentum $p$ can be thought of as a wave having wave length
\eqn{
\lambda = \frac{h}{p} .
}{5}
This idea, first proposed in his 1924 doctorate thesis, was the object of experimental verification. In particular, in 1927, C. J. Davisson and L. H. Germer proved at Bell Telephone Laboratories in New York that diffraction of electrons from a crystal lattice could take place, thus yielding the first validation of de Broglie's idea. 

\subsection{The Bohr atom}
The discrete spectra of radiation emitted from atoms could not find a reasonable explanation by means of the atomic models based on the classical electromagnetism. The discrete spectra of radiation strongly suggest that the atomic energy levels are quantized. A transition between different levels yields an emitted radiation with a well specified wave length leading, experimentally, to a well specified spectral line. 

Niels Bohr in 1913 conjectured that the angular momentum of the electrons orbiting around the nucleus in an atom can be just an integer multiple of $\hbar = h/(2 \pi)$ (the reduced Planck constant). He further assumed that electrons do not emit radiation if not in a transition between different orbits. In such a transition, the frequency of the emitted or absorbed radiation is given by,
\eqn{
\nu = \frac{E - E'}{h} .
}{6}
The great success of Bohr's postulates in explaining the radiation of atoms, caused the development of this theory up to the so--called Sommerfeld--Wilson quantization rule,
\eqn{
\oint p\ dq = n h, \qquad n = 1, 2, 3,\ \ldots\ ,
}{7}
where the integral is over a closed path (orbit), and $p$ is the momentum canonically conjugated with the coordinate $q$ of a particle (see \ref{anamec}). \equa{7} was formulated in 1915. These theoretical studies have a role in the history of modern physics. However, they are to be considered as a mere phenomenological theory of atomic spectra, but not yet a reformulation of the basic concepts of mechanics.

\eject
\section{Wave packets and the uncertainty principle}
The basic ideas suggested by the failure of classical physics, when applied to microscopic objects (electrons, atoms, nuclei, \ldots) are condensed in what we called wave--particle duality. The challenge is to reconcile the concept of wave (a non--localized object) with that of particle (a localized object). Configurations of waves may exist where localization emerges in the form of wave packets. The basic mathematical theory is Fourier's analysis. \\
One may imagine a function $f(x)$ expressed as
\eqn{
f(x) = \int_{-\infty}^{\infty} dk\ g(k) e^{i k x} .
}{8}
\equa{8} defines $f(x)$ as a linear superposition of infinite standing plane waves with wavelength $\lambda = 2\pi/k$. Two neighbouring maxima of the real and imaginary parts of $e^{i k x}$ are separated by a distance $2\pi/k$. Each wave is ``weighted'' by the coefficient function $g(k)$. \\
One may consider a Gaussian weight function
\eqn{
g(k) = e^{-\alpha \left( k - k_0 \right)^2} ,
}{9}
where $\alpha > 0$. One can substitute \equa{9} into \equa{8} and evaluate the integral on the right hand side of \equa{8},
\eqn{
f(x) = \int_{-\infty}^{\infty} dk\ e^{-\alpha \left( k - k_0 \right)^2} e^{i k x} = e^{i k_0 x}  \int_{-\infty}^{\infty} dk'\ e^{-\alpha k'^2} e^{i k' x} = e^{i k_0 x} \sqrt{\frac{\pi}{\alpha}} \ e^{-x^2/(4 \alpha)} ,
}{10}
where the change of integration variable $k' = k - k_0$ has been employed.
On considering the square modulus of $g(k)$ and the square modulus of $f(x)$, 
\eqn{
|g(k)|^2 = e^{- 2 \alpha \left( k - k_0 \right)^2} , \qquad |f(x)|^2 =  \frac{\pi}{\alpha} \ e^{-x^2/(2 \alpha)} ,
}{11}
one realizes that we have a Gaussian signal both in $k$--space and in $x$--space. We may easily check that, when $k = k_0 \pm \Delta k/2$, where $\Delta k = 2/\sqrt{2 \alpha}$, the Gaussian signal in $k$--space drops to $1/e$ times its peak value. When $x = \pm \Delta x/2$, where $\Delta x = 2\sqrt{2\alpha}$, the Gaussian signal in $x$--space drops to $1/e$ times its peak value. If $\alpha$ becomes smaller and smaller, the signal in $k$--space increases its width $\Delta k$, while the signal in $x$--space decreases its width $\Delta x$. One may easily check that  
\eqn{
\Delta k \ \Delta x = 4 .
}{12}
The precise numerical value of the product is not important. What is important is that the product $\Delta k \ \Delta x$ is finite and independent of $\alpha$,
\eqn{
\Delta k \ \Delta x \sim O(1) .
}{13}
\equa{13} establishes an \emph{uncertainty principle} for the wave packets and its interpretation is straightforward: a highly localized wave packet in $k$--space, i.e. one with a small $\Delta k$, means a poorly localized wave packet in $x$--space, i.e. one with a large $\Delta x$, and vice versa. 
\begin{figure}[h!]
\begin{center}
\includegraphics[width=0.8\textwidth]{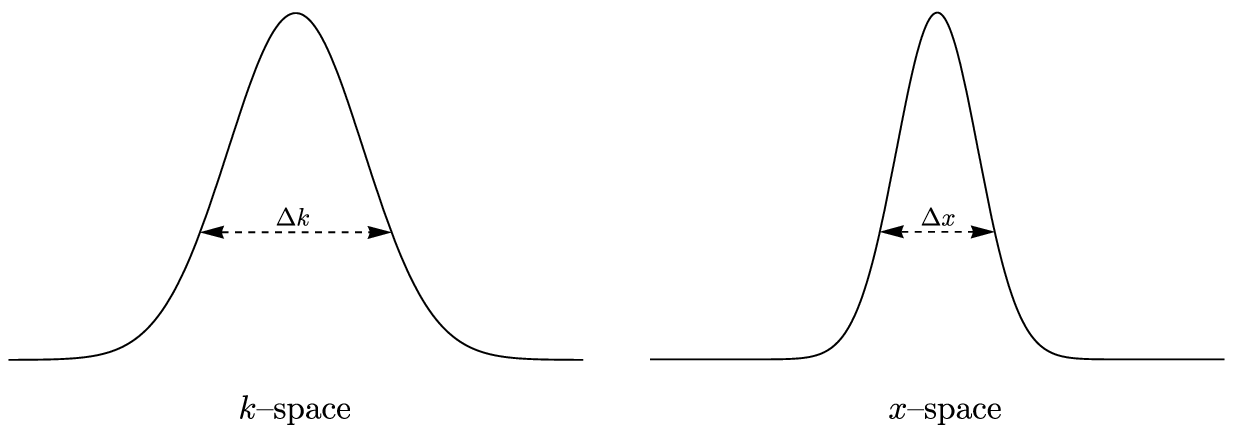}
\end{center}
\caption{Localization of wave packets in $k$--space and in $x$--space}
\label{fig1}
\end{figure}
It is not possible to reduce the width of the Gaussian signal both in $k$--space and in $x$--space.

Let us now consider a wave packet made up with the superposition of travelling plane waves. Then, \equa{8} is to be replaced with
\eqn{
F(x,t) = \int_{-\infty}^{\infty} dk\ g(k) e^{i \left( k x - \omega t \right)} ,
}{14}
where $\omega = 2 \pi \nu$ is the angular frequency, while $k$ is again related to the wavelength, $k=2\pi/\lambda$. The simplest case is that with $\omega = c k$, where $c$ is a constant. In this case, \equa{13} describes a superposition of plane waves having a constant phase velocity $c$. Then, a comparison between \equa{8} and \equa{14}, allows one to write
\eqn{
F(x,t) = f(x - c t) .
}{15}
The effect of the standing waves being replaced by travelling waves is just a rigid translational motion of the wave packet with a velocity $c$. No distortion of the wave packet is caused by the time evolution.

Our objective is the use of wave packets for the modelling of particles, i.e. of localized objects. We are therefore interested in assuming a general relationship $\omega = \omega(k)$. We may assume a wave packet strongly localized in $k$--space, with a marked peak at $k=k_0$, expressed by \equa{9}.
 
The strong localization in $k$--space suggests that one may express $\omega(k)$ as a Taylor expansion around $k=k_0$ truncated to lowest orders,
\eqn{
\omega(k) \approx \omega(k_0) + \left. \adervo{\omega}{k} \right|_{k=k_0}\!\! \left( k - k_0 \right) + \frac{1}{2} \left. \adervso{\omega}{k} \right|_{k=k_0}\!\! \left( k - k_0 \right)^2  .
}{16}
We use the notations
\eqn{
v_g = \left. \adervo{\omega}{k} \right|_{k=k_0} , \qquad \beta = \frac{1}{2} \left. \adervso{\omega}{k} \right|_{k=k_0} ,
}{17}
where $v_g$ is called the \emph{group velocity}.\\
We now substitute \equasa{9}{16} in \equa{14} and we obtain
\spl{
F(x,t) &= \int_{-\infty}^{\infty} dk\ e^{-\alpha \left( k - k_0 \right)^2} e^{i \left\{ k x - [\omega(k_0) + (k - k_0) v_g + (k - k_0)^2 \beta ] t \right\}} \\
&= \int_{-\infty}^{\infty} dk'\ e^{-\alpha k'^2} e^{i \left\{ (k_0 + k') x - [\omega(k_0) + k' v_g + k'^2 \beta ] t \right\}} \\
&= e^{i \left[ k_0  x - \omega(k_0) t \right]} 
\int_{-\infty}^{\infty} dk'\ e^{- \left( \alpha + i \beta t \right) k'^2} e^{i k' \left( x - v_g t \right)} 
\\
&= e^{i \left[ k_0  x - \omega(k_0) t \right]} 
\int_{-\infty}^{\infty} dk'\ e^{- \alpha' k'^2} e^{i k' x'} ,
}{18}
where $\alpha' = \alpha + i \beta t$ and $x' = x - v_g t$. We note that the integral appearing in \equa{18} is just the same as that evaluated in \equa{10}, with $\alpha$ replaced by $\alpha'$ and $x$ replaced by $x'$. Thus, we may write
\spl{
F(x,t) &= e^{i \left[ k_0  x - \omega(k_0) t \right]} 
\int_{-\infty}^{\infty} dk'\ e^{- \alpha' k'^2} e^{i k' x'} \\
&= e^{i \left[ k_0  x - \omega(k_0) t \right]} \sqrt{\frac{\pi}{\alpha + i \beta t}} \ e^{- \left( x - v_g t \right)^2/\left[4 \left( \alpha + i \beta t \right) \right]} .
}{19}
Again, we consider the square moduli of $g(k)$ and of $F(x,t)$ as in \equa{11}, 
\eqn{
|g(k)|^2 = e^{- 2 \alpha \left( k - k_0 \right)^2} , \qquad |F(x,t)|^2 =  \frac{\pi}{\sqrt{ \alpha^2 + \beta^2 t^2 }} \ e^{- \alpha \left( x - v_g t \right)^2/\left[2 \left( \alpha^2 + \beta^2 t^2 \right) \right]} .
}{20}
One recognizes Gaussian signals both in $k$--space and in $x$--space. The peak of the Gaussian signal in $x$ is located at $x = v_g t$, and thus it travels in the $x$--direction with the constant group velocity, $v_g$.

The width of the Gaussian signal in $k$--space is obviously $\Delta k = 2/\sqrt{2 \alpha}$, while the width of the Gaussian signal in $x$--space is now a function of time, 
\eqn{
\Delta x = 2 \sqrt{2 \alpha}\ \sqrt{1 + \frac{\beta^2 t^2}{\alpha^2}} .
}{21}
The width of the Gaussian signal in $x$--space increases in time. This means that the time evolution of the wave packet implies a spreading in $x$--space with a decreasing value at the peak position, $x = v_g t$. The latter feature can be easily inferred from \equa{20}.
\begin{figure}[h!]
\begin{center}
\includegraphics[width=0.8\textwidth]{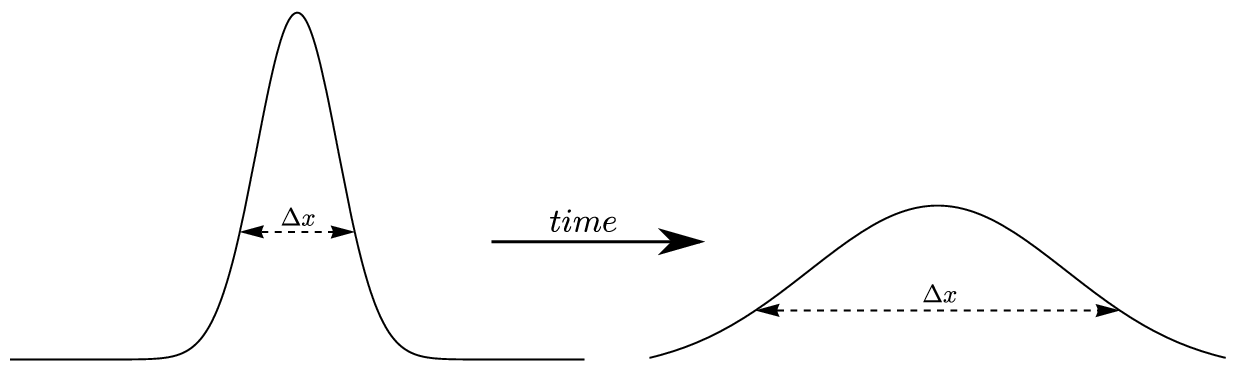}
\end{center}
\caption{Spreading of the wave packet in $x$--space}
\label{fig2}
\end{figure}\\
As a consequence, the uncertainty principle,
\eqn{
\Delta k \ \Delta x = 4\ \sqrt{1 + \frac{\beta^2 t^2}{\alpha^2}}\ > 4 ,\quad \Longrightarrow  \quad \Delta k \ \Delta x  \gtrsim O(1),
}{22}
now implies a more and more severe indetermination of the wave packets in $k$--space and in $x$--space as time progresses.

\subsection{Heisenberg uncertainty principle}
The goal of our analysis is the formulation of a wave theory of particles. Thus, we must intend the wave packet in $x$--space, travelling with a velocity $v_g$, as a particle with mass $m$, momentum $p$, and kinetic energy $E=p^2/(2m)$. This suggests the relationship
\eqn{
v_g = \adervo{\omega}{k} = \frac{p}{m} .
}{23}
Moreover, by invoking Planck's formula $E = h \nu = \hbar \omega$, we obtain
\eqn{
\omega = \frac{E}{\hbar} = \frac{p^2}{2 m \hbar} .
}{24}
\equasa{23}{24} yield,
\eqn{
\adervo{\omega}{p} = \frac{p}{m \hbar} = \frac{1}{\hbar}\ \adervo{\omega}{k} ,
}{25}
that can be satisfied on setting
\eqn{
k = \frac{p}{\hbar} .
}{26}
Since $\lambda = 2 \pi/k$, \equa{26} is consistent with de Broglie's assumption, \equa{5}. We now consider the expressions of $\omega$ and $k$ given by \equasa{24}{26} and the structure of the wave packet described by \equa{14} to express the \emph{wave function} of a particle as
\eqn{
\psi(x,t) = \frac{1}{\sqrt{2 \pi \hbar}} \int_{-\infty}^{\infty} dp\ \phi(p) e^{i \left( p x - E t \right)/\hbar} .
}{27}
The constant $1/\sqrt{2 \pi \hbar}$ plays just the role of an overall normalization of the wave packet, with no implications for the features described above. An important point is the interpretation of $\psi(x,t)$ as a model of a moving particle. For instance, the spreading of the wave packet associated with the time evolution cannot be interpreted as an increase in size of the particle, say an electron or a nucleus. On the other hand, the width of the wave packet in $x$--space describes the width of the region of space where the particle is likely to be found at a given instant of time. The spreading of the wave packet does not mean that the particle increases its size. It means that the region of space where the particle is likely to be found becomes larger and larger in the time evolution. Hence, its localization becomes more and more uncertain. Werner Heisenberg felt uneasy with the new role of the physical modelling of Nature, and he wrote \citep[see][page 94]{Baggott2011}:

\begin{quotation}
``When we know the present precisely, we can predict the future, is not
the conclusion but the assumption. Even in principle we cannot know
the present in all detail. For that reason everything observed is a selection
from a plenitude of possibilities and a limitation on what is possible in the
future. As the statistical character of quantum theory is so closely linked to 
the inexactness of all perceptions, one might be led to the presumption
that behind the perceived statistical world there still hides a ‘real’ world in
which causality holds. But such speculations seem to us, to say it explicitly,
fruitless and senseless. Physics ought to describe only the correlation of
observations. One can express the true state of affairs better in this way:
Because all experiments are subject to the laws of quantum mechanics~\ldots~, it
follows that quantum mechanics established the final failure of causality.''
\end{quotation} 

We are now ready for interpreting the uncertainty relationship, \equa{22}, in terms of the position $x$ and the momentum $p$ of the particle,
\eqn{
\Delta p \ \Delta x  \gtrsim O(\hbar),
}{28}
where \equa{26} has been used. The relationship between the widths of the Gaussian signals associated with $|\phi(p)|^2$ and $|\psi(x,t)|^2$ means that the uncertainty in the knowledge of the position of a particle is inversely proportional to the uncertainty in the knowledge of the momentum of the particle. The constant of proportionality is $O(\hbar)$.

The smallness of $\hbar$ ensures that \equa{28} is effective, and hence marks a breakup of classical physics, only for microscopic systems.  

\subsection{The wave equation for a free particle}
We start from \equa{27} and evaluate the time derivative of $\psi(x,t)$,
\spl{
\aderv{\psi(x,t)}{t} &= -\frac{i}{\hbar \sqrt{2 \pi \hbar}} \int_{-\infty}^{\infty} dp\ E \phi(p) e^{i \left( p x - E t \right)/\hbar}\\
&= -\frac{i}{2 m \hbar \sqrt{2 \pi \hbar}} \int_{-\infty}^{\infty} dp\ p^2 \phi(p) e^{i \left( p x - E t \right)/\hbar} ,
}{29}
where the relationship $E=p^2/(2m)$ for the kinetic energy has been used. We also evaluate the second derivative of $\psi(x,t)$ with respect to $x$,
\eqn{
\adervs{\psi(x,t)}{x} = -\frac{1}{\hbar^2 \sqrt{2 \pi \hbar}} \int_{-\infty}^{\infty} dp\ p^2 \phi(p) e^{i \left( p x - E t \right)/\hbar} .
}{30}
The integrals on the right hand sides of \equasa{29}{30} are coincident, so that one can write the equation
\eqn{
i \hbar\ \aderv{\psi(x,t)}{t} = - \frac{\hbar^2}{2 m}\ \adervs{\psi(x,t)}{x} .
}{31}
\equa{31} is a second--order partial differential equation that describes the motion of a free particle. In fact, no force and hence no potential energy is assumed to exist, as the energy is just the kinetic energy $E=p^2/(2m)$. Since \equa{31} was obtained from \equa{27}, the general solution of \equa{31} is
\eqn{
\psi(x,t) = \frac{1}{\sqrt{2 \pi \hbar}} \int_{-\infty}^{\infty} dp\ \phi(p) e^{i \left( p x - E t \right)/\hbar} .
}{32}
\equa{31} is a linear differential equation, so that any linear combination of two solutions of \equa{31}, $\psi_1(x,t)$ and $\psi_2(x,t)$, 
\eqn{
\psi(x,t) = c_1 \psi_1(x,t) + c_2 \psi_2(x,t) ,\qquad c_1 \in \mathbb{C},\ c_2 \in \mathbb{C},
}{33}
is still a solution of \equa{31}.

The wave function $\psi(x,t)$ can be determined at every instant of time $t >0$, by solving \equa{31}, if the wave function is known at the initial time $t=0$. In other words, the initial condition,
\eqn{
\psi(x,0) = \Psi_0(x) ,
}{34}
must be prescribed. This feature is a consequence of \equa{31} being first order in the time variable.

\subsection{The statistical interpretation of $\psi(x,t)$}
We pointed out that the spreading of the wave packet defined by
\eqn{
\psi(x,t) = \frac{1}{\sqrt{2 \pi \hbar}} \int_{-\infty}^{\infty} dp\ \phi(p) e^{i \left[ p x - \left(p^2/2 m\right) t \right]/\hbar} 
}{35}
does not mean an increase in size of the free particle, but just an increase in width of the region where the particle is likely to be found. This observation implies a statistical interpretation of the wave function $\psi(x,t)$. The wave function is complex--valued, so that the spreading of the wave packet was established on evaluating the width $\Delta x$ of the real distribution $|\psi(x,t)|^2$. Thus a reasonable guess is that
\eqn{
P(x,t)\ d x = |\psi(x,t)|^2\ dx 
}{36}
is the probability of finding the particle in the infinitesimal region between $x$ and $x + dx$. For this interpretation to be consistent, we must require that
\eqn{
\int_{-\infty}^{\infty} dx\ |\psi(x,t)|^2 = 1. 
}{37}
\equa{37} is a normalization condition of the wave function induced by its statistical interpretation: the probability of finding the particle somewhere is $1$. The normalization condition \equa{36} makes sense only if the integral on the left hand side of \equa{37} exists,
\eqn{
\int_{-\infty}^{\infty} dx\ |\psi(x,t)|^2 < \infty, 
}{38}
i.e. the wave function must be an element of the set $\mathcal{L}^2(\mathbb{R})$ of square integrable functions in the real domain.

The probability density $P(x,t)$, defined by \equa{36}, is conserved in the time evolution. This statement can be proved as follows. We evaluate the complex conjugate of the wave equation (\ref{31}),
\eqn{
i \hbar\ \aderv{\bar{\psi}(x,t)}{t} = \frac{\hbar^2}{2 m}\ \adervs{\bar{\psi}(x,t)}{x} ,
}{39}
where $\bar{\psi}(x,t)$ is the complex conjugate of $\psi(x,t)$.
We multiply \equa{31} by $\bar{\psi}(x,t)$, we multiply \equa{39} by $\psi(x,t)$ and sum the two resulting equations, so that we obtain
\spl{
&i \hbar \left( \bar{\psi}\ \aderv{\psi}{t} + \psi\ \aderv{\bar{\psi}}{t} \right) 
= -\frac{\hbar^2}{2 m} \left( \bar{\psi}\ \adervs{\psi}{x} - \psi\ \adervs{\bar{\psi}}{x} \right) ,
\\
&i \hbar\ \aderv{}{t}\ |\psi|^2 
= -\frac{\hbar^2}{2 m}\ \aderv{}{x} \left( \bar{\psi}\ \aderv{\psi}{x} - \psi\ \aderv{\bar{\psi}}{x} \right) ,
\\
&\aderv{P(x,t)}{t} + \aderv{j(x,t)}{x} = 0 ,
}{40}
where the probability current density, $j(x,t)$, is defined as
\eqn{
j(x,t) = \frac{\hbar}{2 i m} \left[ \bar{\psi}(x,t)\ \aderv{\psi(x,t)}{x} - \psi(x,t)\ \aderv{\bar{\psi}(x,t)}{x} \right] .
}{41}
\equa{40} has the form of a conservation law for the density $P(x,t)$ with the associated current $j(x,t)$. A natural consequence of the wave function $\psi(x,t)$ being square integrable over the real axis, is that
\eqn{
\lim_{x \to \pm \infty} \psi(x,t) = 0 ,
}{42}
and the same obviously holds for its complex conjugate, $\bar{\psi}(x,t)$, so that it must hold for $j(x,t)$, as one can infer from its definition. Then, on integrating \equa{40} with respect to $x$ over the real axis $\mathbb{R}$, we conclude that
\eqn{
\aderv{}{t} \int_{-\infty}^{\infty} dx\ P(x,t) = \lim_{x \to - \infty} j(x,t) - \lim_{x \to  \infty} j(x,t) = 0.
}{43}
\equa{43} proves the time--independence of the integral on the left hand side of the normalization condition \equa{37}, and definitely justifies the choice to set this integral to $1$.

We mention that there exist quantum states such that the probability density $P(x,t) = |\psi(x,t)|^2$ is uniform over the real axis. In these states, it is evidently not possible to normalize the wave function as prescribed in \equa{37}, since the integral of $|\psi(x,t)|^2$ over the range $(-\infty, \infty )$ is infinite. This happens for the quantum states where the momentum of the particle is known with certainty. In these states, as a consequence of the uncertainty principle \equa{28}, the position of the particle is completely unknown, so that the probability density $P(x,t)$ is uniformly distributed over the real axis. The impossibility to prescribe the normalization condition \equa{37} does not affect in any sense the statistical interpretation of the probability density $P(x,t) = |\psi(x,t)|^2$.

\vfill\eject
\section{The Schr\"odinger equation}
The conservation of the probability density, \equa{40}, holds also when the particle is not free, but subject to a potential $V(x)$. However, the wave equation (\ref{31}) must be generalized, so that it is replaced by
\eqn{
i \hbar\ \aderv{\psi(x,t)}{t} = - \frac{\hbar^2}{2 m}\ \adervs{\psi(x,t)}{x} + V(x)\ \psi(x,t).
}{44}
\equa{44} is called the one--dimensional \emph{Schr\"odinger equation}. It can be easily extended to the three-dimensional case,
\eqn{
i \hbar\ \aderv{\psi(\vec{r},t)}{t} = - \frac{\hbar^2}{2 m}\ \nabla^2 \psi(\vec{r},t) + V(\vec{r})\ \psi(\vec{r},t),
}{45}
where $\vec{r} = (x,y,z)$ is the position vector. 
\\
The conservation of the probability density, 
\eqn{
P(\vec{r},t) = |\psi(\vec{r},t)|^2 ,
}{46}
now reads
\eqn{
\aderv{P(\vec{r},t)}{t} + \bnabla \cdot \vec{j}(\vec{r},t) = 0 ,
}{47}
where the probability current density is a vector field given by
\eqn{
\vec{j}(\vec{r},t) = \frac{\hbar}{2 i m} \left[ \bar{\psi}(\vec{r},t)\ \bnabla \psi(\vec{r},t) - \psi(\vec{r},t)\ \bnabla \bar{\psi}(\vec{r},t) \right] .
}{48}

\subsection{Expectation values of physical observables}
Let us consider a physical observable expressed as a function of the position, $f(x)$. Since $P(x, t) = |\psi(x,t)|^2 = \bar{\psi}(x,t)\ \psi(x,t)$ is the probability density, one can evaluate the expectation value of $f(x)$ as
\eqn{
\langle f(x) \rangle = \int_{-\infty}^{\infty} dx\ f(x)\ P(x,t) = \int_{-\infty}^{\infty} dx\ f(x)\ |\psi(x,t)|^2 = \int_{-\infty}^{\infty} dx\ \bar{\psi}(x,t)\ f(x)\ \psi(x,t) ,
}{49}
Let us recall the relationship between position $x$ and momentum $p$ in classical mechanics,
\eqn{
p = m v = m\ \adervo{x}{t} .
}{50}
A reasonable guess is that, for a quantum system, the expectation values of $x$ and $p$ follow the same relationship, 
\eqn{
\langle p \rangle = m\ \adervo{\langle x \rangle}{t} .
}{51}
Hence, on account of \equa{49}, we can write
\spl{
\langle p \rangle = m\ \adervo{\langle x \rangle}{t} &= m\ \adervo{}{t}\int_{-\infty}^{\infty} dx\ \bar{\psi}\ x\ \psi \\
&= m\ \int_{-\infty}^{\infty} dx \left( \aderv{\bar{\psi}}{t}\ x\ \psi + \bar{\psi}\ x\ \aderv{\psi}{t} \right) \\
&= \frac{\hbar}{2 i}\ \int_{-\infty}^{\infty} dx \left( \adervs{\bar{\psi}}{x}\ x\ \psi - \bar{\psi}\ x\ \adervs{\psi}{x} \right)
,
}{52}
where we used the Schr\"odinger equation (\ref{44}). We can employ the integration by parts to rewrite \equa{52} as,
\spl{
\langle p \rangle &= \frac{\hbar}{2 i} \left[ \aderv{\bar{\psi}}{x}\ x\ \psi - \bar{\psi}\ x\ \aderv{\psi}{x} \right]_{-\infty}^{\infty} - \frac{\hbar}{2 i}  \int_{-\infty}^{\infty} dx \ \aderv{\bar{\psi}}{x}\ \psi + \frac{\hbar}{2 i}  \int_{-\infty}^{\infty} dx \ \bar{\psi}\ \aderv{\psi}{x} \\
&= \frac{\hbar}{2 i} \left[ \aderv{\bar{\psi}}{x}\ x\ \psi - \bar{\psi}\ x\ \aderv{\psi}{x} - \bar{\psi}\ \psi \right]_{-\infty}^{\infty} + \frac{\hbar}{i}  \int_{-\infty}^{\infty} dx \ \bar{\psi}\ \aderv{\psi}{x} = \int_{-\infty}^{\infty} dx \ \bar{\psi}\ \frac{\hbar}{i}\  \aderv{}{x}\ \psi
.
}{53}
In fact, for square integrable functions, we have
\eqn{
\left[ \aderv{\bar{\psi}}{x}\ x\ \psi - \bar{\psi}\ x\ \aderv{\psi}{x} - \bar{\psi}\ \psi \right]_{-\infty}^{\infty} = 0 .
}{54}
The expectation value of $p$ evaluated in \equa{53},
\eqn{
\langle p \rangle = \int_{-\infty}^{\infty} dx \ \bar{\psi}(x,t)\ \frac{\hbar}{i}\  \aderv{}{x}\ \psi(x,t) ,
}{55}
strongly suggests that we identify the momentum with the differential operator
\eqn{
p = \frac{\hbar}{i}\ \aderv{}{x} .
}{56}
Then, in general, we have
\eqn{
\langle f(p) \rangle = \int_{-\infty}^{\infty} dx \ \bar{\psi}(x,t)\ f\Bigg(\frac{\hbar}{i}\  \aderv{}{x}\Bigg) \ \psi(x,t) .
}{57}
A specially interesting quantity is the Hamiltonian function of classical mechanics (see \ref{anamec}),
\eqn{
H(x,p) = \frac{p^2}{2 m} + V(x) .
}{58}
On account of \equa{56}, this corresponds to the quantum mechanical \emph{Hamiltonian operator},
\eqn{
H = - \frac{\hbar^2}{2 m}\ \adervs{}{x} + V(x) .
}{59}
We have achieved an important point: the quantum mechanical observables are defined through their expectation values. Thus, they are defined as operators acting on the wave function $\psi(x,t)$. The position operator is simple, just the multiplication of $x$ by $\psi(x,t)$. The momentum operator and the Hamiltonian operator are less trivial as they involve derivatives. The classical observables commute, since $x p$ and $p x$ coincide, but the quantum observables in general do not. For example, we may evaluate the commutator of the quantum operators $x$ and $p$,
\eqn{
[x, p] = x p - p x .
}{60}
We can write
\eqn{
[x, p]\ \psi(x,t) = x\ \frac{\hbar}{i}\ \aderv{}{x}\ \psi(x,t) - \frac{\hbar}{i}\ \aderv{}{x}\left[ x\ \psi(x,t) \right] = - \frac{\hbar}{i}\ \psi(x,t) = i \hbar\ \psi(x,t) ,
}{61}
and this means that
\eqn{
[x, p] = i \hbar .
}{62}

\subsection{The $x$--space and the $p$--space}
A representation of the wave function $\psi(x,t)$ can be based on a Fourier transform
\eqn{
\psi(x,t) = \frac{1}{\sqrt{2 \pi \hbar}} \int_{-\infty}^{\infty} dp\ \phi(p, t) e^{i p x/\hbar} .
}{63}
The inversion theorem of Fourier transforms yields
\eqn{
\phi(p, t) = \frac{1}{\sqrt{2 \pi \hbar}} \int_{-\infty}^{\infty} dx\ \psi(x,t) e^{- i p x/\hbar} .
}{64}
A proof of the inversion theorem is provided in \ref{appenB}.
One can think of $\phi(p, t)$ as the wave function in the momentum space, while $\psi(x,t)$ is the wave function in the coordinate space. For the sake of brevity, the coordinate space is called $x$--space, while the momentum space is called $p$--space. The interpretation of $\phi(p, t)$ as the wave function in $p$--space is supported by the Parseval--Plancherel formula (a proof can be found in \ref{appenB}),
\eqn{
\int_{-\infty}^{\infty} dp\ |\phi(p, t)|^2 = \int_{-\infty}^{\infty} dx\ |\psi(x,t)|^2 . 
}{65}
\equa{65} implies that the normalization of $\psi(x,t)$ coincides with the normalization of $\phi(p, t)$. Then, \equa{37} yields
\eqn{
\int_{-\infty}^{\infty} dp\ |\phi(p, t)|^2 = 1 . 
}{66}
From \equasa{55}{63}, we can write the expectation value of $p$ as
\spl{
\langle p \rangle &= \int_{-\infty}^{\infty} dx \ \bar{\psi}(x,t)\ \frac{\hbar}{i}\  \aderv{}{x}\ \psi(x,t) \\
&= \int_{-\infty}^{\infty} dx \ \bar{\psi}(x,t)\ \frac{\hbar}{i}\  \aderv{}{x}\ \frac{1}{\sqrt{2 \pi \hbar}} \int_{-\infty}^{\infty} dp\ \phi(p, t) e^{i p x/\hbar} \\
&= \int_{-\infty}^{\infty} dx \ \bar{\psi}(x,t)\ \frac{1}{\sqrt{2 \pi \hbar}} \int_{-\infty}^{\infty} dp\ p\ \phi(p, t) e^{i p x/\hbar} \\
&= \int_{-\infty}^{\infty} dp\ p\ \phi(p, t)\ \frac{1}{\sqrt{2 \pi \hbar}} \int_{-\infty}^{\infty} dx\ \bar{\psi}(x,t) e^{i p x/\hbar} \\
&= \int_{-\infty}^{\infty} dp\ p\ \phi(p, t)\ \bar{\phi}(p,t) = \int_{-\infty}^{\infty} dp\ p\ |\phi(p, t)|^2
.
}{67}
\equasa{66}{67} suggest that $|\phi(p, t)|^2\ dp$ can be interpreted as the probability of the particle to have a momentum between $p$ and $p + dp$.

The possibility to define a wave function in the $x$--space, $\psi(x,t)$, and a wave function in the $p$--space, $\phi(p, t)$, leads us to the conclusion that we may have different \emph{representations} of the wave function. If we adopt the $x$--space representation of the wave function, then the quantum mechanical operators for the position and for the momentum are
\eqn{
\hat{x} = x, \qquad \hat{p} = \frac{\hbar}{i}\ \aderv{}{x} . 
}{68}
Hereafter, the quantum mechanical operators will be denoted by the \emph{hat} symbol, $\hat{~}\,$. This is a common practice to distinguish the generally non--commuting quantum observables, or \emph{q--numbers}, from the commuting classical observables, or \emph{c--numbers}.

If we adopt the $p$--space representation of the wave function, then the quantum mechanical operators for the position and for the momentum are
\eqn{
\hat{x} = i \hbar\ \aderv{}{p}, \qquad \hat{p} = p . 
}{69}
The form of the position operator $\hat{x}$ in the $p$--space representation is easily proved on evaluating the expectation value of the position $x$,  
\spl{
\langle x \rangle &= \int_{-\infty}^{\infty} dx \ \bar{\psi}(x,t)\ x\ \psi(x,t) \\
&= \int_{-\infty}^{\infty} dx \ \bar{\psi}(x,t)\ x\ \frac{1}{\sqrt{2 \pi \hbar}} \int_{-\infty}^{\infty} dp\ \phi(p, t) e^{i p x/\hbar} \\
&= \int_{-\infty}^{\infty} dp\ \phi(p, t)\ \frac{1}{\sqrt{2 \pi \hbar}} \int_{-\infty}^{\infty} dx\ x\ \bar{\psi}(x,t) e^{i p x/\hbar} \\
&= - \int_{-\infty}^{\infty} dp\ \phi(p, t)\ i \hbar\ \aderv{}{p}\ \frac{1}{\sqrt{2 \pi \hbar}} \int_{-\infty}^{\infty} dx\ \bar{\psi}(x,t) e^{i p x/\hbar} \\
&= - \int_{-\infty}^{\infty} dp\ \phi(p, t)\ i \hbar\ \aderv{}{p}\ \bar{\phi}(p,t) = \int_{-\infty}^{\infty} dp\ \bar{\phi}(p,t)\ i \hbar\ \aderv{}{p}\ \phi(p,t)
,
}{70}
where we used the obvious fact that $\langle x \rangle$ is real, so that it coincides with its complex conjugate.\\
In general, we have
\eqn{
\langle f(x) \rangle = \int_{-\infty}^{\infty} dp \ \bar{\phi}(p,t)\ f\Bigg(i \hbar\  \aderv{}{p}\Bigg) \ \phi(p,t) .
}{71}

\subsection{On the general structure of quantum mechanics}
We learned from the analysis of the wave--particle duality in a quantum particle that all the accessible information on the state of a particle at a given time can be obtained from the wave function $\psi$, in terms of the expectation values of physical observables. On the other hand, the one--dimensional state of a classical particle is determined by the pair $(x, p)$, position and momentum, at a given time. In classical mechanics, the space of states is the \emph{phase space}, i.e. the vector space of the pairs $(x, p)$.

The space of the states of a quantum particle is the functional space whose elements are the wave functions $\psi$. The space of states is a linear space. This feature is a consequence of the linearity of the Schr\"odinger equation: a linear combination of any two solutions of the Schr\"odinger equation is still a solution of the Schr\"odinger equation. The possibility of having different representations of the quantum states, such as the $x$--space representation and the $p$--space representation discussed in the preceding section, suggests that we denote the states of a quantum system in a representation--independent form. 

Paul Dirac proposed a notation based on the \emph{ket} symbol: a quantum state is expressed as $\ket{\psi}$. The linear space of the states of a quantum system is endowed with a norm and with an inner product,
\gat{
\parallel\! \psi\! \parallel^2 = \braket{\psi}{\psi} = \int_{-\infty}^{\infty} dx \ |\psi(x,t)|^2 ,\\
\braket{\psi}{\varphi} = \int_{-\infty}^{\infty} dx \ \bar{\psi}(x,t) \varphi(x,t) , \qquad
\braket{\varphi}{\psi} = \bar{\braket{\psi}{\varphi}} .
}{72}
The inner product between $\ket{\psi}$ and $\ket{\varphi}$ is obtained by transforming the ket $\ket{\psi}$ into a \emph{bra}, $\bra{\psi}$ and then contracting the bra $\bra{\psi}$ with the ket $\ket{\varphi}$ to obtain the \emph{bra--c--ket} $\braket{\psi}{\varphi}$. Formally the bra is the dual vector of a corresponding ket, the relationship between the bras and the kets is just as that between row vectors and column vectors in a $\mathbb{R}^n$ vector space. The main point is that the vector space of the quantum states $\ket{\psi}$ is not finite--dimensional. The mathematical structure of this vector space is what the mathematicians call a \emph{Hilbert space}. In practice, a Hilbert space is an infinite--dimensional vector space endowed with an inner product. We will denote as $\mathcal{H}$ the Hilbert space of the quantum states. The main properties of the inner product are
\gat{
\braket{\psi}{\varphi} = \overline{\braket{\varphi}{\psi}} ,\\
\braket{\psi}{\psi} \ge 0, \qquad\quad  \braket{\psi}{\psi} = 0\quad \Longrightarrow\quad \ket{\psi}=0 ,\\
\braket{\psi}{c_1  \varphi_1 + c_2 \varphi_2} = c_1 \braket{\psi}{\varphi_1} + c_2 \braket{\psi}{\varphi_2} , \qquad c_1, c_2 \in \mathbb{C} .
}{73}
If the quantum states are the elements of the Hilbert space $\mathcal{H}$, the quantum observables are linear operators in the vector space $\mathcal{H}$. A physical observable $\hat{A}$ must be a special linear operator. In fact, let $\ket{a}$ be a state such that the observable $\hat{A}$ displays the value $a$, namely
\eqn{
\ket{\hat{A} a} = a \ket{a} .
}{74}
We can collect all the possible values allowed for the observable $\hat{A}$, $\{ a_1, a_2,\ \ldots\ , a_n,\ \ldots\ \}$, and consider the quantum states, $\{ \ket{a_1}, \ket{a_2},\ \ldots\ , \ket{a_n},\ \ldots\ \}$, where $\hat{A}$ displays one of these allowed values. It is straightforward, from \equa{74}, to identify $\{ a_1, a_2,\ \ldots\ , a_n,\ \ldots\ \}$ with the \emph{eigenvalues} of the linear operator $\hat{A}$ and $\{ \ket{a_1}, \ket{a_2},\ \ldots\ , \ket{a_n},\ \ldots\ \}$ with the eigenvectors or, more precisely, with the \emph{eigenstates}. If it is intended to describe a physical observable, the linear operator $\hat{A}$ must have real eigenvalues, i.e. $\hat{A}$ must be a \emph{Hermitian operator}. A Hermitian operator $\hat{A}$ coincides with its adjoint operator $\hat{A}^\dagger$, defined so that
\eqn{
\braket{\psi}{\hat{A} \varphi} = \braket{\hat{A}^\dagger \psi}{\varphi} , \qquad \forall\ \ket{\psi}, \ket{\varphi} \in \mathcal{H} .
}{75}
As a consequence of \equa{75}, a Hermitian operator is also called \emph{self--adjoint}, namely $\hat{A}^\dagger = \hat{A}$.\\
Very useful properties of the adjoint operators regard the sum and the product of two operators $\hat{A}$ and $\hat{B}$, as well as the product by a complex number $\lambda$,
\eqn{
\big( \hat{A} + \hat{B} \big)^\dagger = \hat{A}^\dagger + \hat{B}^\dagger , \qquad
\big( \hat{A} \hat{B} \big)^\dagger = \hat{B}^\dagger \hat{A}^\dagger , \qquad
\big( \lambda \hat{A} \big)^\dagger = \bar{\lambda}\, \hat{A}^\dagger.
}{76}
A quantum observable does not necessarily display a discrete sequence of eigenvalues, the generic eigenvalue $a$ can also vary with continuity. One can think, for instance, to the one--dimensional position operator $\hat{x}$. Evidently, the eigenvalues of $\hat{x}$ are all the possible positions $x$ in the real axis. Depending on the \emph{spectrum} of the eigenvalues being discrete or continuous, we may have different forms of normalization for the eigenstates of an observable $\hat{A}$. Eigenstates corresponding to different eigenvalues are orthogonal, i.e. their inner product is zero. For a continuous spectrum, we have a normalization formula based on \emph{Dirac's delta function},
\eqn{
\braket{a}{a'} = \delta(a - a') ,
}{77}
where $\ket{a}$ and $\ket{a'}$ are any two eigenstates of $\hat{A}$.\\
For a discrete spectrum, we have a normalization formula based on \emph{Kronecker's delta symbol},
\eqn{
\braket{a_n}{a_m} = \delta_{n m} ,
}{78}
where $\ket{a_n}$ and $\ket{a_m}$ are any two eigenstates of $\hat{A}$.

\subsection{Bases in the Hilbert space and representations}
The eigenstates of any Hermitian operator $\hat{A}$ may form an \emph{orthonormal} basis of the Hilbert space $\mathcal{H}$, where the term orthonormal means that the elements of the basis are mutually orthogonal, with respect to the inner product, and that they are normalized, with respect to either \equa{77} or (\ref{78}). Thus, any quantum state $\ket{\psi}$ can be expanded as a linear combination of the eigenstates of $\hat{A}$,
\eqn{
\ket{\psi} = \sum_a \ket{a} \braket{a}{\psi} .
}{79}
The sum over $a$ is to be considered, entirely or partly, as an integral over $a$ if the whole spectrum of the eigenvalues of $\hat{A}$, or part of it, is continuous. From the expansion (\ref{79}), one can define a projector, i.e. an operator $\ket{a}\!\bra{a}$ that yields the projection of $\ket{\psi}$ in the subspace spanned by $\ket{a}$. The completeness of the basis, i.e. the possibility to express \emph{any} state $\ket{\psi} \in \mathcal{H}$ by the expansion (\ref{79}), can be formally rephrased as,
\eqn{
\sum_a \ket{a}\! \bra{a} = \hat{I} ,
}{80}
where $\hat{I}$ is the identity operator in $\mathcal{H}$. One may easily prove that two commuting Hermitian operators $\hat{A}$ and $\hat{B}$ may have common eigenstates,
\spl{
&[\hat{A}, \hat{B}] = 0 , \qquad \ket{\hat{A} a} = a \ket{a},\\
&\ket{\hat{B}\hat{A} a} = a \ket{\hat{B} a},\\
&\ket{\hat{A}\hat{B} a} = a \ket{\hat{B} a}.
}{81}
Therefore, $\ket{\hat{B} a}$ is an eigenstate of $\hat{A}$ corresponding to the eigenvalue $a$. This may imply two possible cases.
\begin{enumerate}
\item The eigenvalue $a$ is non--degenerate, so that $\ket{\hat{B} a} = b \ket{a}$, where $b \in \mathbb{R}$, namely $\ket{a}$ is an eigenstate of $\hat{B}$. 
\item The eigenvalue $a$ is degenerate, so that $\ket{\hat{B} a}$ can be expressed as a linear combination of different eigenstates $\big\{ \ket{a}, \ket{a'}, \ket{a''},\ \ldots\ \big\}$ of the operator $\hat{A}$ corresponding to the same eigenvalue $a$. In this case, the operator $\hat{B}$, as well as every other Hermitian operator that commutes with $\hat{A}$, may serve to remove the degeneracy.
\end{enumerate}
If we consider the orthonormal basis formed with the eigenstates $\big\{ \ket{a} \big\}$ of a given observable $\hat{A}$, we obtain the $a$--space representation of the quantum state $\ket{\psi(t)}$ at time $t$,
\eqn{
\psi(a,t) = \braket{a}{\psi(t)}  .
}{82}
The complex--valued function $\psi(a,t)$ is the wave function corresponding to the quantum state $\ket{\psi(t)}$ in the $a$--space representation. Examples are the $x$--space representation and the $p$--space representation discussed in the preceding section, obtained when $\hat{A}$ is the position operator $\hat{x}$ or the momentum operator $\hat{p}$, respectively,
\spl{
\psi(x,t) = \braket{x}{\psi(t)}  ,\\
\psi(p,t) = \braket{p}{\psi(t)}  .
}{83}
At an instant $t$, the expectation value of an observable $\hat{A}$ in a quantum state $\ket{\psi(t)}$ is expressed, in the Dirac notation, as
\eqn{
\langle \hat{A} \rangle = \braket{\psi(t)}{\hat{A} \psi(t)},
}{84}
so that, on account of \equa{80}, we can write
\spl{
\langle \hat{A} \rangle = \sum_a \braket{\psi(t)}{a} \braket{a}{\hat{A} \psi(t)}
&= \sum_a \braket{\psi(t)}{a} \braket{\hat{A} a}{\psi(t)}
= \sum_a a\ \braket{\psi(t)}{a} \braket{a}{\psi(t)}\\
&= \sum_a a\ \bar{\psi(a,t)}\ \psi(a,t) = \sum_a a\ |\psi(a,t)|^2.
}{85}
Again, the sum over $a$ can be possibly replaced by an integral when the eigenvalue $a$ is a continuous variable.

\subsection{The statistical interpretation of quantum states}

On the basis of \equa{85}, we can formulate the statistical interpretation of the wave function $\psi(a,t)$. If the spectrum of the eigenvalues $\{ a \}$ is continuous, then $|\psi(a, t)|^2 da$ is the probability that a measurement of the observable $\hat{A}$ yields an outcome within $a$ and $a + d a$. If the spectrum of the eigenvalues $\{ a \}$ is discrete, then $|\psi(a, t)|^2$ is the probability that a measurement of the observable $\hat{A}$ yields the outcome $a$. An immediate consequence of this statistical interpretation is the following. If $\ket{\psi(t)}$ coincides with the eigenstate $\ket{a}$ at some instant $t$, then a measurement of the observable $\hat{A}$ at that instant yields \emph{with certainty}, i.e. with probability $1$, the outcome $a$. We said that two commuting observables $\hat{A}$ and $\hat{B}$ may have common eigenstates. Let $\ket{a}$ be such an eigenstate, displaying the eigenvalue $a$ of $\hat{A}$ and the eigenvalue $b$ of $\hat{B}$. Then, in the state $\ket{a}$, simultaneous measurements of the observables $\hat{A}$ and $\hat{B}$ yield \emph{with certainty}, i.e. with probability $1$, the values $a$ and $b$ respectively.

\subsection{Operator methods and the uncertainty principle}
The use of an abstract Hilbert space representation of the quantum states allows one to obtain  a rigorous proof of the uncertainty principle. Let us consider two Hermitian operators, $\hat{A}$ and $\hat{B}$, in the Hilbert space $\mathcal{H}$. We assume that $\hat{A}$ and $\hat{B}$ are non--commuting. The following lemma holds.
\begin{lem}\label{lem1}
Let $\hat{C}$ be a linear operator in $\mathcal{H}$, such that
$$ [ \hat{A}, \hat{B} ] = i \hat{C} , $$
where $\hat{A}$ and $\hat{B}$ are two Hermitian operators in $\mathcal{H}$. Then, $\hat{C}$ is also Hermitian.
\end{lem}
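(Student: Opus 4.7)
The plan is to take the adjoint of the defining relation $[\hat{A},\hat{B}] = i\hat{C}$ and exploit the algebraic identities for the adjoint collected in equation~(76) together with the hypothesis $\hat{A}^\dagger = \hat{A}$, $\hat{B}^\dagger = \hat{B}$. Since taking an adjoint is an involution, comparing the resulting expression with the original identity should force $\hat{C}^\dagger = \hat{C}$.

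Concretely, I would first compute the adjoint of the commutator. Using $(\hat{A}\hat{B})^\dagger = \hat{B}^\dagger \hat{A}^\dagger$ and linearity of the adjoint, one gets
\begin{equation*}
[\hat{A},\hat{B}]^\dagger = (\hat{A}\hat{B})^\dagger - (\hat{B}\hat{A})^\dagger = \hat{B}^\dagger \hat{A}^\dagger - \hat{A}^\dagger \hat{B}^\dagger = \hat{B}\hat{A} - \hat{A}\hat{B} = -[\hat{A},\hat{B}],
\end{equation*}
where the Hermiticity of $\hat{A}$ and $\hat{B}$ was used in the third equality. Thus the commutator of two Hermitian operators is anti--Hermitian.

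Next, I would compute the adjoint of the right--hand side. The rule $(\lambda \hat{A})^\dagger = \bar{\lambda}\,\hat{A}^\dagger$ with $\lambda = i$ yields $(i\hat{C})^\dagger = -i\,\hat{C}^\dagger$. Setting the two adjoints equal gives $-[\hat{A},\hat{B}] = -i\,\hat{C}^\dagger$, i.e.\ $[\hat{A},\hat{B}] = i\,\hat{C}^\dagger$. Comparing with the hypothesis $[\hat{A},\hat{B}] = i\hat{C}$ and cancelling the factor $i$ delivers $\hat{C}^\dagger = \hat{C}$, which is the claim.

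There is no substantive obstacle here; the proof is a one--line manipulation of adjoints. The only point requiring mild care is not to confuse the two sign reversals that occur: one sign flip from exchanging the factors inside the commutator, and one from the complex conjugation of the prefactor $i$. It is precisely because these two sign flips cancel that the prefactor $i$ in the definition $[\hat{A},\hat{B}] = i\hat{C}$ (rather than $[\hat{A},\hat{B}] = \hat{C}$) is natural, and it is worth remarking in the proof that this is why the canonical commutator $[\hat{x},\hat{p}] = i\hbar$ has a factor of $i$ on the right.
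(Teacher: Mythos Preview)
Your proof is correct and follows essentially the same route as the paper: compute the adjoint of the commutator using $(\hat{A}\hat{B})^\dagger = \hat{B}^\dagger\hat{A}^\dagger$ and the Hermiticity of $\hat{A}$, $\hat{B}$ to get $[\hat{A},\hat{B}]^\dagger = -[\hat{A},\hat{B}]$, then compare with $(i\hat{C})^\dagger = -i\hat{C}^\dagger$ to conclude $\hat{C}^\dagger = \hat{C}$. The paper's proof is slightly terser (it writes $[\hat{A},\hat{B}]^\dagger = [\hat{B},\hat{A}] = -i\hat{C}$ directly), but the underlying manipulation is identical.
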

\begin{proof}
On account of \equa{76}, we may write $\big( i \hat{C} \big)^\dagger = - i \hat{C}^\dagger$. Moreover,
$$ [ \hat{A}, \hat{B} ]^\dagger = \big( \hat{A} \hat{B} \big)^\dagger - \big( \hat{B} \hat{A} \big)^\dagger = \hat{B}^\dagger \hat{A}^\dagger - \hat{A}^\dagger \hat{B}^\dagger  = \hat{B} \hat{A} - \hat{A} \hat{B} = [ \hat{B}, \hat{A} ] = - i \hat{C}.
$$
This proves that $\hat{C} = \hat{C}^\dagger$.
\end{proof}
We define the uncertainties associated with the observables $\hat{A}$ and $\hat{B}$ in terms of the square deviations from their average values,
\gat{
\Delta A^2 = \big< \big( \hat{A} - \langle \hat{A} \rangle \big)^2 \big> = \braket{\psi}{\big( \hat{A} - \langle \hat{A} \rangle \big)^2 \psi},  \\
\Delta B^2 = \big< \big( \hat{B} - \langle \hat{B} \rangle \big)^2 \big> = \braket{\psi}{\big( \hat{B} - \langle \hat{B} \rangle \big)^2 \psi}. 
}{86}
We can define the operators
\eqn{
\hat{V} = \hat{A} - \langle \hat{A} \rangle , \qquad \hat{W} = \hat{B} - \langle \hat{B} \rangle,
}{87}
so that 
\eqn{
\Delta A^2 = \braket{\psi}{\hat{V}^2 \psi}, \qquad \Delta B^2 = \braket{\psi}{\hat{W}^2 \psi} .
}{88}
We note that
\eqn{
[ \hat{V}, \hat{W} ] = \big( \hat{A} - \langle \hat{A} \rangle \big) \big( \hat{B} - \langle \hat{B} \rangle \big) - \big( \hat{B} - \langle \hat{B} \rangle \big) \big( \hat{A} - \langle \hat{A} \rangle \big) 
= [ \hat{A}, \hat{B} ] .
}{89}
We now evaluate the scalar product of the state
\[
\ket{\hat{V} \psi} + i \lambda \ket{\hat{W} \psi} , \qquad \lambda \in \mathbb{R},
\]
with itself, namely
\spl{
&\braket{\hat{V} \psi}{\hat{V} \psi} + \lambda^2 \braket{\hat{W} \psi}{\hat{W} \psi} + i \lambda \braket{\hat{V} \psi}{\hat{W} \psi} - i \lambda \braket{\hat{W} \psi}{\hat{V} \psi} \ge 0,\\
&\braket{\psi}{\hat{V}^2 \psi} + \lambda^2 \braket{\psi}{\hat{W}^2 \psi} + i \lambda \braket{\psi}{[ \hat{V}, \hat{W} ] \psi} \ge 0,\\
&\Delta A^2 + \lambda^2 \Delta B^2 + i \lambda \braket{\psi}{[ \hat{A}, \hat{B} ] \psi} \ge 0.
}{90}
The inequality (\ref{90}) must hold for every $\lambda \in \mathbb{R}$. The choice of $\lambda$ that ensures the most restrictive form of the inequality (\ref{90}) is obtained by seeking the minimum of the left hand side of \equa{90} with respect to $\lambda$. This implies that we force the derivative of 
\[
\Delta A^2 + \lambda^2 \Delta B^2 + i \lambda \braket{\psi}{[ \hat{A}, \hat{B} ] \psi}
\]
with respect to $\lambda$ to be zero. Thus, we obtain
\eqn{
\lambda = - \frac{i \braket{\psi}{[ \hat{A}, \hat{B} ] \psi}}{2\, \Delta B^2} ,
}{91}
so that, \equa{90} yields
\spl{
&\Delta A^2 - \frac{\braket{\psi}{[ \hat{A}, \hat{B} ] \psi}^2}{4\, \Delta B^2} + \frac{\braket{\psi}{[ \hat{A}, \hat{B} ] \psi}^2}{2\, \Delta B^2} \ge 0,\\
&\Delta A^2\, \Delta B^2 \ge - \frac{1}{4}\ \braket{\psi}{[ \hat{A}, \hat{B} ] \psi}^2 .
}{92}
As a consequence of Lemma~\ref{lem1}, the right hand side of \equa{92} is a non--negative real number.

\equa{92} is a generalized form of Heisenberg's uncertainty principle. An important application is for the pair of observables $\hat{x}$ and $\hat{p}$. We know, in fact, from \equa{62} that   
\eqn{
[\hat{x},\hat{p}] = i \hbar ,
}{93}
so that the inequality (\ref{92}) yields
\eqn{
\Delta x^2\, \Delta p^2 \ge \frac{1}{4}\ \hbar^2 ,
}{94}
or equivalently
\eqn{
\Delta x\, \Delta p \ge \frac{1}{2}\ \hbar ,
}{95}
\equa{92} reveals an important fact about the uncertainty principle: it is not possible to achieve simultaneously and \emph{with certainty} the values of two \emph{non--commuting} observables. On the other hand, one can achieve simultaneously and \emph{with certainty} the values of two \emph{commuting} observables. The second fact is a consequence of the possibility to determine a basis of common eigenstates of two commuting observables for the Hilbert space $\mathcal{H}$.

\subsection{The Hamiltonian operator, the evolution operator, and the\\ stationary states}
The expression of the Hermitian operator that describes a quantum observable is obtained from the corresponding expression of classical mechanics. For instance, as already pointed out in \equasa{58}{59}, the Hamiltonian is given by
\eqn{
\hat{H} = \frac{\hat{p}^2}{2 m} + V(\hat{x}) ,
}{96}
and the Schr\"odinger equation is expressed as
\eqn{
i \hbar \aderv{}{t}\ \ket{\psi} = \ket{ \hat{H} \psi} .
}{97}
On integrating \equa{97}, one obtains
\eqn{
\ket{\psi(t)} = \ket{ e^{ - i (t - t_0) \hat{H}/ \hbar } \psi(t_0)} ,
}{98}
where the exponential of an operator $\hat{A}$ is defined through its series expansion
\eqn{
e^{ \hat{A}} = \sum_{n=0}^\infty \frac{\hat{A}^n}{n!} .
}{99}
\equa{98} implies that one can define an \emph{evolution operator},
\eqn{
\hat{U}(t, t_0) = e^{ - i (t - t_0) \hat{H}/ \hbar  } .
}{100}
The operator $\hat{U}(t, t_0)$ is not a Hermitian operator, so that it cannot be associated to an observable quantity. On the other hand, $\hat{U}(t, t_0)$ is an \emph{unitary operator}, i.e. it satisfies the mathematical identity
\eqn{
\hat{U}(t, t_0)\ \hat{U}(t, t_0)^\dagger = \hat{U}(t, t_0)^\dagger\ \hat{U}(t, t_0) = \hat{I}  .
}{101}
The eigenstates of the evolution operator $\hat{U}(t, t_0)$ coincide with the eigenstates of $\hat{H}$, as it can be inferred from its definition. The eigenstates of $\hat{H}$, denoted as $\{ \ket{\varepsilon} \}$, are also called the \emph{energy eigenstates},
\eqn{
\ket{\hat{H} \varepsilon} = E \ket{\varepsilon} ,
}{102}
where $E$ is the eigenvalue of $\hat{H}$ in the eigenstate $\ket{\varepsilon}$. \equa{102} plays a special role in the study of quantum systems and it is well--known as the \emph{time--independent Schr\"odinger equation}.\\
The time evolution of an energy eigenstate is obtained from \equa{98}
\eqn{
\ket{\varepsilon(t)} = \ket{ e^{ - i (t - t_0) \hat{H}/ \hbar  } \varepsilon(t_0)} = e^{ - i (t - t_0) E/ \hbar  } \ket{ \varepsilon(t_0)} ,
}{103}
The energy eigenstates are quite peculiar as the expectation value of an arbitrary observable $\hat{A}$ does not depend on time, as it can be deduced from \equa{103},
\eqn{
\braket{\varepsilon(t)}{\hat{A} \varepsilon(t)} = \braket{\varepsilon(t_0)}{\hat{A} \varepsilon(t_0)} , \qquad \forall t, t_0 \in \mathbb{R}.
}{104}
In this sense, the energy eigenstates are termed \emph{stationary states}; they are time--independent except for a phase factor, 
\[
e^{ - i (t - t_0) E/ \hbar  } ,
\]
ineffective with respect to the expectation values of any observable.

\vfill\eject

\section{One--dimensional quantum mechanics}
Simple examples of the quantum mechanical behaviour can be given by considering a particle in a  one--dimensional potential field expressed through a given function $V(x)$. The $x$--space representation is adopted, so that the quantum state is represented by means of the wave function $\psi(x,t)$. The time--independent Schr\"odinger equation for the stationary states
\eqn{
\psi(x,t) = e^{ - i t E/ \hbar  }\ \Psi(x) ,
}{105}
can be written as
\eqn{
- \frac{\hbar^2}{2 m}\ \adervso{\Psi(x)}{x} + V(x) \Psi(x) = E \Psi(x) .
}{106}

\subsection{An infinite potential well}
Let us consider a potential given by,
\eqn{
V(x) = 
\begin{cases}
\infty, \quad &x < 0, \\
0, &0 < x < a , \\
\infty, &x > a . 
\end{cases}
}{107}
The infinite value of $V(x)$ outside the interval $0 < x < a$ means that the space outside this interval is not accessible to the particle. One can think of a particle in a box having width $a$.
\begin{figure}[h!]
\begin{center}
\includegraphics[height=0.37\textwidth]{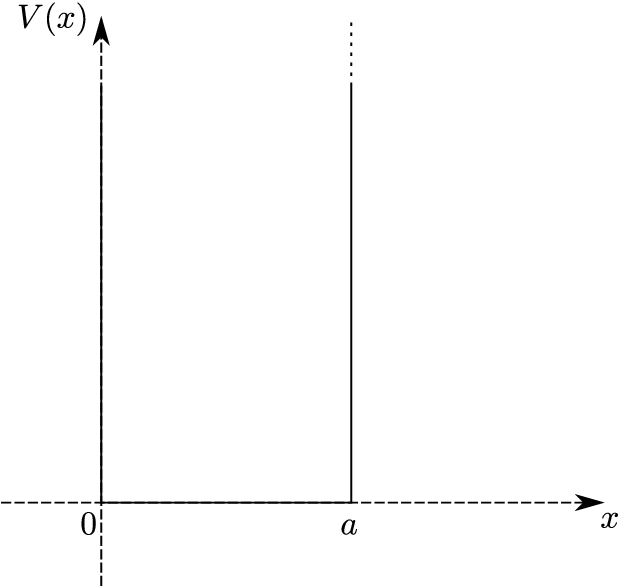}
\end{center}
\caption{The infinite potential well}
\label{fig3}
\end{figure}
Since the particle is not allowed to escape the region $0 < x < a$, the continuity of the probability distribution at $x=0$ and $x=a$ imposes the wave function to vanish at these positions. Thus, the time--independent Schr\"odinger equation, \equa{106}, is subject to the boundary conditions,
\eqn{
\Psi(0) = 0 = \Psi(a) .
}{108}
For $0 < x < a$, \equa{106} can be written as
\eqn{
\adervso{\Psi(x)}{x} = - \frac{2 m E}{\hbar^2}\ \Psi(x) .
}{109}
\equa{109}, with the boundary conditions \equa{108}, admits the non--trivial solutions, i.e. the eigenfunctions,
\eqn{
\Psi_n(x) = C_n \sin(n \pi x/a), \qquad n = 1, 2, 3,\ \ldots\ ,
}{110}
provided that the energy $E$ assumes the eigenvalues,
\eqn{
E_n = \frac{n^2 \pi^2 \hbar^2}{2 m a^2} , \qquad n = 1, 2, 3,\ \ldots\ .
}{111}
\equa{111} reveals an important fact about the quantum particle: its energy cannot assume arbitrary values, but just those defined by the discrete spectrum of the eigenvalues $E_n$.

As it usually happens in the solution of an eigenvalue problem, the normalization constants appearing in \equa{110} can be chosen arbitrarily. However, we should not forget the physical meaning of the wave function $\psi(x,t)$. In fact, the probability distribution $|\psi(x,t)|^2$ must be such that its integral over the interval $0 < x < a$ is equal to $1$,
\spl{
\int_0^a dx\ |\psi(x,t)|^2 = 1, \qquad \Longrightarrow \qquad \int_0^a dx\ |\Psi_n(x)|^2 = 1 ,
}{112}
and this yields $C_n = \sqrt{2/a}$, so that
\eqn{
\Psi_n(x) = \sqrt{\frac{2}{a}}\ \sin(n \pi x/a), \qquad n = 1, 2, 3,\ \ldots\ ,
}{113}
A notable feature of the energy spectrum, \equa{111}, for the particle in the infinite potential well is that the relative increment between two neighbouring energy levels is a decreasing function of $n$,
\eqn{
\frac{\Delta E_n}{E_n} = \frac{E_{n+1} - E_n}{E_n} = \frac{2}{n} + \frac{1}{n^2} , \qquad \lim_{n \to \infty} \frac{\Delta E_n}{E_n} = 0 .
}{114}
This result can be rephrased as follows: when the energy of the particle is such that 
\[
E \gg \frac{\pi^2 \hbar^2}{2 m a^2} , 
\]
the energy spectrum becomes continuous. In other words, when the energy assumes sufficiently high values, the quantum behaviour tends to mimic the classical behaviour of a particle in a box.

Another notable feature of the energy spectrum, \equa{111}, is the existence of a non--vanishing minimum value for the energy of the particle. It is the value of the energy for the level $n=1$. This value,
\eqn{
E_1 = \frac{\pi^2 \hbar^2}{2 m a^2} ,
}{115}
is usually called the \emph{zero--point energy}. The quantum particle cannot display an energy lower than the zero--point energy.

\subsection{Quantum tunnelling through a barrier}
I am not aware of anybody or anything in the macroscopic world that is able to jump over a fence without a kinetic energy sufficient to overcome the gravitational potential energy associated to the height of the fence. A quantum particle can do it: this effect is called \emph{quantum tunnelling}.

\noindent To illustrate the quantum tunnelling, let us consider the potential barrier defined by the function
\eqn{
V(x) = 
\begin{cases}
0, \quad &x < 0, \\
V_0, &0 < x < a , \\
0, &x > a , 
\end{cases}
}{116}
where $V_0 > 0$ is a given constant.
\begin{figure}[h!]
\begin{center}
\includegraphics[height=0.37\textwidth]{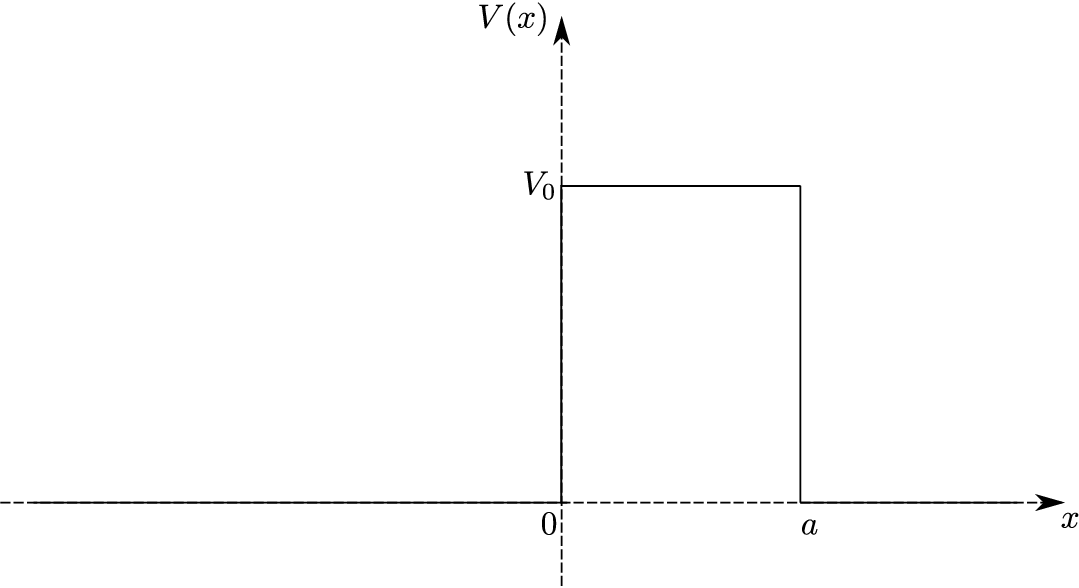}
\end{center}
\caption{The potential barrier}
\label{fig4}
\end{figure}
Unlike the preceding example, the region of space allowed to the quantum particle is in this case the whole real axis. For a classical particle, it is impossible to penetrate the potential barrier if its energy is lower than the barrier height $V_0$. For a quantum particle, things are different as we will see.

On the left of the barrier, $x < 0$, the time--independent Schr\"odinger equation yields
\eqn{
\adervso{\Psi(x)}{x} = - k^2 \Psi(x) , \qquad k = \sqrt{\frac{2 m E}{\hbar^2}}.
}{117}
The general solution is the superposition of incident and reflected travelling waves
\eqn{
\Psi(x) = e^{i k x} + R\ e^{-i k x} .
}{118}
In the barrier, $0 < x < a$, the time--independent Schr\"odinger equation is written as
\eqn{
\adervso{\Psi(x)}{x} = - k_B^2 \Psi(x) , \qquad k_B = \sqrt{\frac{2 m (E - V_0)}{\hbar^2}}.
}{119}
If $E > V_0$, the general solution is 
\eqn{
\Psi(x) = C\ e^{i k_B x} + C'\ e^{-i k_B x} .
}{120}
If $E < V_0$, the general solution is 
\eqn{
\Psi(x) = C\ e^{\beta x} + C'\ e^{-\beta x} , \qquad \beta  = \sqrt{\frac{2 m (V_0 - E)}{\hbar^2}}.
}{121}
On the right of the barrier, $x > a$, the time--independent Schr\"odinger equation yields
\eqn{
\adervso{\Psi(x)}{x} = - k^2 \Psi(x) , \qquad k = \sqrt{\frac{2 m E}{\hbar^2}}.
}{122}
We may have, in this region, a transmitted wave given by
\eqn{
\Psi(x) = T\ e^{i k x} .
}{123}
We can obtain the reflection coefficient $R$, the transmission coefficient $T$, and the integration constants $C$ and $C'$, by imposing the smooth matching of the wave function at the positions $x=0$ and $x=a$, namely
\spl{
&\Psi(0^-) = \Psi(0^+), \qquad \left. \adervo{\Psi}{x} \right|_{x=0^-} = \left. \adervo{\Psi}{x} \right|_{x=0^+} ,\\
&\Psi(a^-) = \Psi(a^+), \qquad \left. \adervo{\Psi}{x} \right|_{x=a^-} = \left. \adervo{\Psi}{x} \right|_{x=a^+} .
}{124}
The continuity of $\Psi(x)$ is an obvious consequence of the continuity of the probability density $|\Psi(x)|^2$. On the other hand, the continuity of the first derivative $d \Psi(x)/ d x$ is a more subtle matter. If one integrates \equa{106} across a point of discontinuity of the potential $V(x)$, say $x=0$, one obtains
\eqn{
- \frac{\hbar^2}{2 m} \left[ \adervo{\Psi(x)}{x} \right]_{-\delta}^\delta = \int_{-\delta}^\delta dx \left[ E - V(x) \right] \Psi(x) , \qquad \delta > 0 .
}{125}
If we let $\delta \to 0$, then the right hand side of \equa{125} must vanish unless we have a Dirac's delta behaviour of $V(x)$ at $x=0$, which is not the present case. Thus, we immediately obtain the continuity of $d \Psi(x)/ d x$ across the point $x=0$. Obviously, the same holds at $x=a$.

We now focus our attention on the case $E < V_0$, i.e. the case such that the penetration of the potential barrier would be impossible for a classical particle. We can determine the constants $R$, $T$, $C$ and $C'$ by employing \equas{118}, (\ref{121}), (\ref{123}), and (\ref{124}),
\spl{
&R = \frac{\left(k^2+\beta ^2\right) \sinh(a \beta)}{2 i k \beta  \cosh(a \beta) + (k^2 - \beta^2) \sinh(a \beta)},\\
&C = \frac{2 k (k-i \beta )}{(k-i \beta )^2-e^{2 a \beta } (k+i \beta )^2},\\
&C' = \frac{2 e^{2 a \beta } k (k+i \beta )}{e^{2 a \beta } (k+i \beta )^2 - (k-i \beta )^2},\\
&T = \frac{4 i e^{a (-i k+\beta )} k \beta }{e^{2 a \beta } (k+i \beta )^2 - (k-i \beta )^2}.
}{126}
Physically meaningful quantities are $|R|^2$ and $|T|^2$, as they express the probability that the particle be reflected or transmitted from the potential barrier, respectively. We obtain
\spl{
&|R|^2 = \frac{(k^2 + \beta ^2)^2 \sinh^2(a \beta)}{4 k^2 \beta ^2 \cosh^2(a \beta) + \left(k^2 - \beta ^2\right)^2 \sinh^2(a \beta)} = \frac{(k^2 + \beta ^2)^2 \sinh^2(a \beta)}{\left(k^2 + \beta ^2\right)^2 \sinh^2(a \beta) + 4 k^2 \beta ^2} ,\\
&|T|^2 = \frac{8 k^2 \beta ^2}{\left(k^2+\beta ^2\right)^2 \cosh(2 a \beta) - k^4 + 6 k^2 \beta ^2 - \beta ^4} 
= \frac{4 k^2 \beta ^2}{\left(k^2 + \beta ^2\right)^2 \sinh^2(a \beta) + 4 k^2 \beta ^2} .
}{127}
From \equa{127}, one can easily see that
\eqn{
|R|^2 + |T|^2 = 1,
}{128}
with the obvious meaning that the particle can be either reflected or transmitted from the barrier. The important point is that $|T|^2 > 0$. This means that, even if $E < V_0$, the particle has a non--zero probability of penetrating the potential barrier. This phenomenon of quantum tunnelling through a potential barrier is quite important. For instance, the thermonuclear fusion of protons in the stars is possible, in most cases, only as a consequence of the quantum tunnelling of protons through the repulsive Coulomb potential barrier. We note that $|T|^2$ decreases with $a$. This is an expected fact, as the larger is the potential barrier the lower is the probability of quantum tunnelling through it.

\subsection{The quantum harmonic oscillator}
As is well--known from classical mechanics, the elastic potential of a simple harmonic oscillator is expressed as
\eqn{
V(x) = \frac{1}{2}\ m \omega^2 x^2 ,
}{129}
where $\omega$ is the angular frequency of the oscillator. Thus, the time--independent Schr\"odinger equation can be written as
\eqn{
- \frac{\hbar^2}{2 m}\ \adervso{\Psi(x)}{x} + \frac{1}{2}\ m \omega^2 x^2\ \Psi(x) = E \Psi(x) .
}{130}
\equa{130} is defined on the whole real axis, and can be rewritten in a more compact form as
\eqn{
- \adervso{\Psi}{q} + q^2 \Psi = \varepsilon \Psi , \qquad \text{with}\quad q = x \sqrt{\frac{m \omega}{\hbar}}, \quad \varepsilon = \frac{2 E}{\hbar \omega} .
}{131}
The conditions at infinity assigned to the solutions of \equa{131} are to make $\Psi$ compatible with the normalization condition \equa{37},
\eqn{
\lim_{x \to \pm \infty} \Psi(x) = 0.
}{132}
We express $\Psi(x)$ as
\eqn{
\Psi = N\ e^{-q^2/2} \varphi(q) ,
}{133}
where $N$ is a normalization constant, so that \equa{131} can be rewritten as
\eqn{
\adervso{\varphi(q)}{q} - 2 q\ \adervo{\varphi(q)}{q} + (\varepsilon - 1) \varphi(q) = 0.
}{134}
\equa{134} is well--known to mathematicians as the \emph{Hermite equation}. It is a second order differential equation with variable coefficients. Its general solution%
\footnote{A good description of this point can be found at the web page:\\[-3pt] \texttt{http://mathworld.wolfram.com/HermiteDifferentialEquation.html}}
can be expressed as a linear combination of the \emph{confluent hypergeometric function of the first kind},
\eqn{
_1\!F_1( - \gamma/4; 1/2; q^2 )    , \qquad \text{with}\quad \gamma = \varepsilon - 1 ,
}{135}
and of the \emph{Hermite function},
\eqn{
H_{\gamma/2} ( q )    .
}{136}
However, we note that the confluent hypergeometric function of the first kind can be approximated for $q^2 \gg 1$ as
\eqn{
_1\!F_1( - \gamma/4; 1/2; q^2 ) \sim \frac{\sqrt{\pi}}{\Gamma(-\gamma/4)} q^{-\gamma /2 - 1} e^{q^2} , 
}{137}
where $\Gamma$ is \emph{Euler's gamma function}. Thus, if $\varphi(q)$ would involve a term proportional to  the confluent hypergeometric function of the first kind, on account of \equasa{133}{137}, the conditions at infinity (\ref{132}) cannot be satisfied. As for the Hermite function, its asymptotic expression when $q \to - \infty$ is expressed as
\eqn{
|H_{\gamma/2} ( q )| \approx  2^{\gamma /2} |q|^{\gamma/2} + e^{q^2} |q|^{- 1 - \gamma/2}\ \frac{ \Gamma(1 + \gamma/2 ) \sin(\pi \gamma / 2)}{\sqrt{\pi}}  .
}{138}
Here, the dangerous term is that proportional to $e^{q^2}$. This term disappears, thus ensuring the validity of the conditions at infinity (\ref{132}), if and only if
\eqn{
\sin(\pi \gamma / 2) = 0, \qquad \Longrightarrow \qquad \gamma = 2 n, \quad  n = 0, 1, 2, 3,\ \ldots\ .
}{139}
Therefore, the conditions at infinity, \equa{132}, are satisfied by considering just the Hermite function with $\gamma$ an even natural number. In this case, the Hermite function is a \emph{Hermite polynomial} with degree equal to $\gamma/2$. Therefore, \equa{133} now reads
\eqn{
\Psi = \Psi_n = N_n\ e^{-q^2/2} H_n ( q ) ,     \qquad \text{with}\quad \varepsilon = 2 n + 1, \quad n = 0, 1, 2, 3,\ \ldots\ . 
}{140}
This means that the allowed energy levels for the harmonic oscillator are given by
\eqn{
E_n = \left( n + \frac{1}{2} \right) \hbar \omega  ,     \qquad n = 0, 1, 2, 3,\ \ldots\ .
}{141}
As in the case of the infinite potential well, we have a zero--point energy of the harmonic oscillator given by the lowest energy level,
\eqn{
E_0 = \frac{1}{2} \ \hbar \omega   .
}{142}
The first 10 Hermite polynomials are
\spl{
 H_0 ( q ) &= 1 ,\\
 H_1 ( q ) &= 2 q ,\\
 H_2 ( q ) &= -2+4 q^2 ,\\
 H_3 ( q ) &= -12 q+8 q^3 ,\\
 H_4 ( q ) &= 12-48 q^2+16 q^4 ,\\
 H_5 ( q ) &= 120 q-160 q^3+32 q^5 ,\\
 H_6 ( q ) &= -120+720 q^2-480 q^4+64 q^6 ,\\
 H_7 ( q ) &= -1680 q+3360 q^3-1344 q^5+128 q^7 ,\\
 H_8 ( q ) &= 1680-13440 q^2+13440 q^4-3584 q^6+256 q^8 ,\\
 H_9 ( q ) &= 30240 q-80640 q^3+48384 q^5-9216 q^7+512 q^9 .\\
}{143}
\equa{143} shows that the Hermite polynomials are even functions of $q$ if $n$ is even, while they are odd functions of $q$ if $n$ is odd. The normalization constants $N_n$ are found so that the normalization condition, \equa{37}, is satisfied
\eqn{
N_n^2 = \frac{1}{2^n n!}\ \sqrt{\frac{m \omega}{\pi\hbar}} .
}{144}
On account of \equas{140}, (\ref{143}), and (\ref{144}), the \emph{ground state} corresponding to the zero--point energy, \equa{142}, is described by a Gaussian wave function,
\eqn{
\Psi_0 = \left({\frac{m \omega}{\hbar \pi}}\right)^{1/4} e^{-q^2/2}  . 
}{145}
The lowest three stationary states are represented qualitatively in Fig.~\ref{fig5}.
\begin{figure}[h!]
\begin{center}
\includegraphics[width=0.8\textwidth]{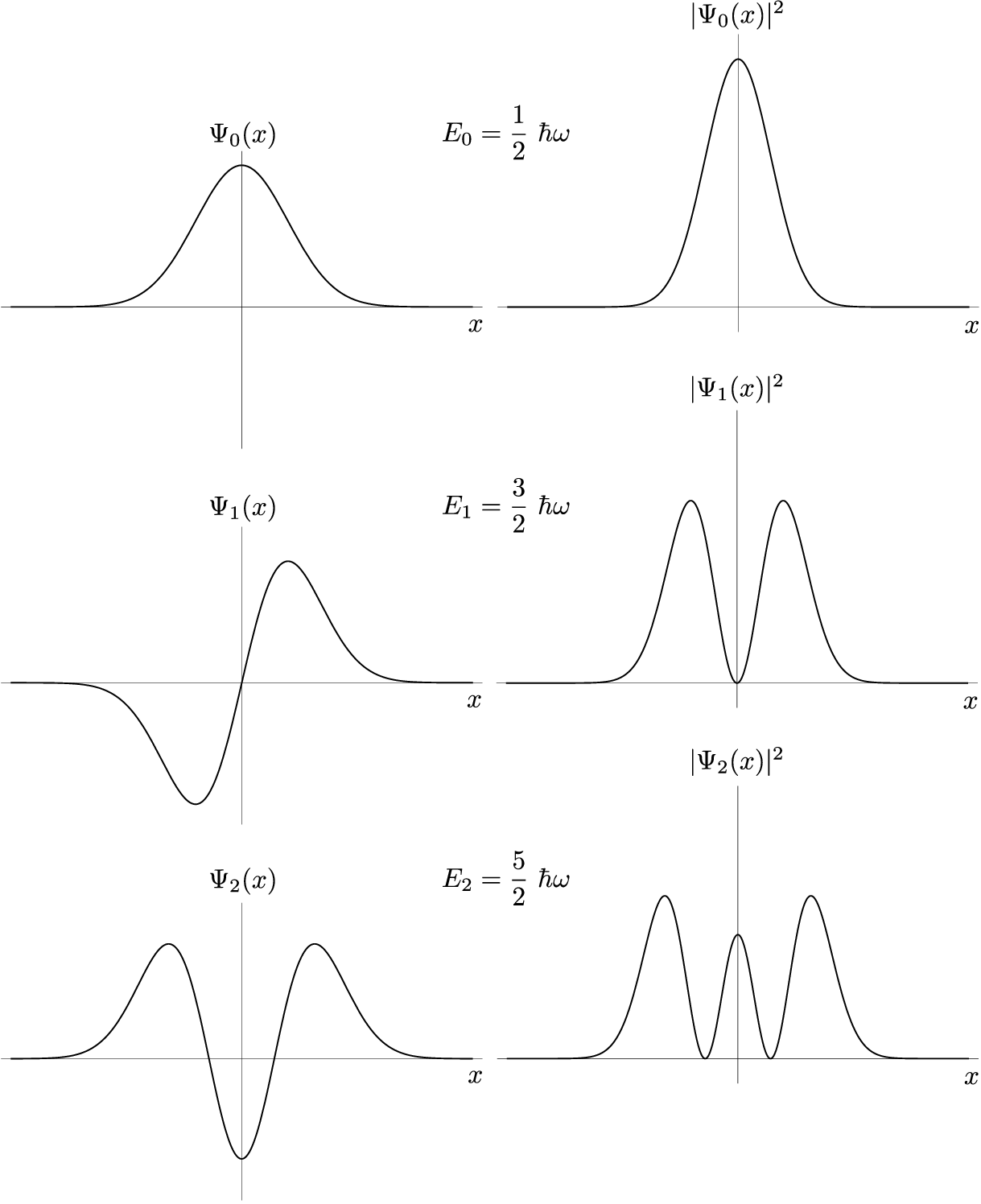}
\end{center}
\caption{Stationary states of the harmonic oscillator with $n = 0, 1, 2$}
\label{fig5}
\end{figure}

An important fact about the quantum harmonic oscillator is that the particle can be everywhere on the real axis, with decreasing probability for large values of $|x|$. On the other hand, the classical harmonic oscillator is confined in a region $-A \le x \le A$, where $A$ is the amplitude of the oscillation. The amplitude $A$ is classically related to the energy $E$ of the oscillator,
\[
E = \frac{1}{2}\ m \omega^2 A^2 ,
\]
so that the classical particle with energy $E$ is confined within the region 
\[
 - \sqrt{\frac{2 E}{m \omega^2}} \le x \le \sqrt{\frac{2 E}{m \omega^2}} .
\]
Thus, the quantum particle is allowed to escape the region where it is classically confined, for a given value of the energy. This phenomenon can be traced back to the quantum tunnel effect, discussed in the preceding section.

\vfill\eject

\section*{Suggestions for further reading}
Among the several treatises on quantum mechanics, the following are strongly suggested for a deeper insight into the behaviour of quantum systems. All these textbooks are at an adequate level for a reader with a basic mathematical education on calculus, Fourier series and integral transforms.

\subsubsection*{\citeauthor{Dirac1981}, \emph{The Principles of Quantum Mechanics}, \citeyear{Dirac1981}} 
It is a classical textbooks whose first edition dates back to 1930. In this book the formal structure of quantum mechanics based on the properties of the vectors (\emph{bras} and \emph{kets}) and the linear operators in a Hilbert space is clearly described.
 
\subsubsection*{\citeauthor{Messiah1999}, \emph{Quantum Mechanics}, \citeyear{Messiah1999}} 
Published in 1958, this book was originally in two volumes. It covers several topics from the basic description of quantum systems to its extension in the domain of relativistic quantum mechanics.

\subsubsection*{\citeauthor{Gasiorowicz1974}, \emph{Quantum Physics}, \citeyear{Gasiorowicz1974}}
The edition of this classical textbook that is cited in these notes is of 1974, but there exists the third edition published in 2003 with substantial changes.

\subsubsection*{\citeauthor{Peres1995}, \emph{Quantum Theory: Concepts and Methods}, \citeyear{Peres1995}}
In this textbook, the focus is on the conceptual foundations of quantum mechanics and of its statistical interpretation. Topics such as the Einstein--Podolsky--Rosen (EPR) paradox and the fundamental questions about the measurement process on a quantum system are extensively discussed.

\subsubsection*{\citeauthor{Sakurai1994}, \emph{Modern Quantum Mechanics}, \citeyear{Sakurai1994}}
This book is a modern classic. Dirac's approach to the description of quantum mechanics is adopted from scratch. It is a textbook at an advanced level with applications of the group theory to the study of angular momentum, and of the symmetries in quantum systems. 

\subsubsection*{\citeauthor{Phillips2003}, \emph{Introduction to Quantum Mechanics}, \citeyear{Phillips2003}}
A recent textbook with several qualities. It is very clear, simple and rigorous, definitely suggested to people looking for a good introductory treatise on quantum physics.

\subsubsection*{\citeauthor{Susskind2015}, \emph{Quantum Mechanics: The Theoretical Minimum}, \citeyear{Susskind2015}}
It is definitely not a conventional textbook on quantum mechanics, but it cannot be classified as a divulgative presentation of the topic as it contains a lot of mathematics. The discussion of quantum theory is made with an original approach which offers a rigorous outline of the main features of the theory in a modern and stimulating manner.

\vfill\eject

\appendix
\renewcommand\thesection{Appendix \Alph{section}}
\setcounter{equation}{0}
\renewcommand{\theequation}{A\arabic{equation}}

\section{}\label{anamec}
\textbf{A quick glance into analytical mechanics}
\\
Let us recall the basic facts about the analytical mechanics of a classical system.

Every mechanical system with $N$ degrees of freedom is described by $N$ \emph{generalized coordinates} $\underaccent{\tilde}{q} = (q_1, q_2,\ \ldots\ , q_N)$, with associated \emph{generalized velocities} $\underaccent{\tilde}{\dot{q}}=(\dot{q}_1, \dot{q}_2,\ \ldots\ , \dot{q}_N)$. Here, the dot stands for time derivative. 
The equations of motion for the mechanical system are formulated in terms of the \emph{Lagrangian function},
\eqn{
L(\underaccent{\tilde}{q}, \underaccent{\tilde}{\dot{q}})  = T - V ,
}{a1}
where $T$ denotes the kinetic energy, and $V$ the potential energy. In \equa{a1}, the Lagrangian $L$ is intended as a function of the generalized coordinates $q_i$ and of the generalized velocities $\dot{q}_i$, while it is considered as independent of time. The time independence of $L$ is appropriate for the \emph{conservative} mechanical systems, i.e. systems that are either isolated or subject to an external stationary force field.

The motion of the mechanical system is described by the \emph{Euler--Lagrange equations},
\eqn{
\frac {d}{d t} \left ( \frac {\partial L}{\partial \dot{q}_i} \right ) -  \frac {\partial L}{\partial q_i} = 0 , \qquad i = 1,2,\ \ldots\ , N .
}{a2}
The \emph{generalized momentum}, $p_i$, associated with the generalized coordinate $q_i$ is defined as
\eqn{
p_i = \aderv{L}{\dot{q}_i} .
}{a3}
A widely used (alternative) terminology for $p_i$ is \emph{momentum canonically conjugated} to $q_i$.
Obviously, one can define $N$ generalized momenta, $\underaccent{\tilde}{p} = (p_1, p_2,\ \ldots\ , p_N)$.

The \emph{Hamiltonian function} is defined through the Legendre transformation of the Lagrangian function, namely
\eqn{
H = \sum_{i=1}^N p_i \dot{q}_i - L .
}{a4}
If $L$ depends only on $\underaccent{\tilde}{q}$ and $\underaccent{\tilde}{\dot{q}}$, one may presume that $H$ is a function of $\underaccent{\tilde}{q}$, $\underaccent{\tilde}{\dot{q}}$ and $\underaccent{\tilde}{p}$. However, on account of \equa{a3}, one may easily prove that $H$ is independent of $\underaccent{\tilde}{\dot{q}}$. Then, if $L = L(\underaccent{\tilde}{q}, \underaccent{\tilde}{\dot{q}})$, one may write
\eqn{
H = H(\underaccent{\tilde}{q}, \underaccent{\tilde}{p}) .
}{a5}
For a conservative mechanical system, the Hamiltonian function coincides with the energy $E$ of the system, namely
\eqn{
H(\underaccent{\tilde}{q}, \underaccent{\tilde}{p}) = T + V = E.
}{a6}
The motion of the mechanical system is described by the \emph{Hamilton equations},
\eqn{
\dot{q}_i = \aderv{H}{p_i}, \qquad \dot{p}_i = - \aderv{H}{q_i} .
}{a7}

\vfill\eject
\renewcommand{\theequation}{B\arabic{equation}}
\section{}\label{appenB}
\textbf{Inversion theorem and the Parseval-Plancherel formula}
\\
Let us define
\eqn{
\phi(p,t) = \frac{1}{\sqrt{2\pi\hbar}} \int_{-\infty}^{\infty} d x\, \psi(x,t) \, e^{-i p x / \hbar} ,
}{b1}
so that we have
\eqn{
\hat{\psi}(x,t) = \frac{1}{\sqrt{2\pi\hbar}} \int_{-\infty}^{\infty} d p\, \phi(p,t) \, e^{i p x / \hbar} = \frac{1}{{2\pi\hbar}} \int_{-\infty}^{\infty} d p\, \int_{-\infty}^{\infty} d x'\, \psi(x',t) \, e^{i p (x - x') / \hbar} \nonumber\\
= \frac{1}{{2\pi\hbar}} \int_{-\infty}^{\infty} d x'\, \psi(x',t) \left[ \int_{-\infty}^{\infty} d p\, e^{i p (x - x') / \hbar} \right] .
}{b2}
Let us now evaluate the integral within square brackets. It is convenient to define 
\eqn{
\int_{-\infty}^{\infty} d p\, e^{i p (x - x') / \hbar} = \int_{-\infty}^{\infty} d p\, e^{i p y / \hbar} = \hbar\, I(y), \quad \text{where} \quad y = x - x' .
}{b3}
With a simple change of integration variable, $k = p/\hbar$, we obtain
\eqn{
I(y) = \frac{1}{\hbar} \int_{-\infty}^{\infty} d p\, e^{i p y / \hbar} = \int_{-\infty}^{\infty} d k\, e^{i k y} = \lim_{a \to 0^+}  \int_{-\infty}^{\infty} d k\, e^{i k y - a k^2} \nonumber\\
=  \lim_{a \to 0^+}  \int_{-\infty}^{\infty} d k\, e^{- a \left( k - \frac{i y}{2 a} \right)^2 - \frac{y^2}{4 a}} =  \lim_{a \to 0^+}  e^{-y^2/(4a)}  \int_{-\infty}^{\infty} d s\, e^{- a s^2}, \quad \text{where} \quad   s =  k - \frac{i y}{2 a} .
}{b4}
Hence, we obtain
\eqn{
I(y) = \lim_{a \to 0^+}  e^{-y^2/(4a)}  \int_{-\infty}^{\infty} d s\, e^{- a s^2} = \lim_{a \to 0^+}  e^{-y^2/(4a)}  \sqrt{\frac{\pi}{a}} = 
\begin{cases}
0 , \quad \text{if} \quad y \ne 0,\\
+\infty , \quad \text{if} \quad y = 0.\\
\end{cases}
}{b5}
Furthermore,
\eqn{
\int_{-\infty}^{\infty} dy \, I(y) = \lim_{a \to 0^+}  \sqrt{\frac{\pi}{a}} \int_{-\infty}^{\infty} dy \, e^{-y^2/(4a)} =  \lim_{a \to 0^+} 2 \sqrt{\frac{\pi}{a}} \sqrt{\pi a} = 2 \pi .
}{b6}
Equations \eqref{b5} and \eqref{b6} imply that
\eqn{
I(y) =  \int_{-\infty}^{\infty} d k\, e^{i k y} = 2 \pi\, \delta(y),
}{b7}
where $\delta(y)$ is Dirac's delta function. By substituting \equasa{b3}{b7} into \equa{b2}, we finally obtain
\eqn{
\hat{\psi}(x,t) = \frac{1}{{2\pi\hbar}} \int_{-\infty}^{\infty} d x'\, \psi(x',t) \, 2\pi\hbar\, \delta(x - x') = \int_{-\infty}^{\infty} d x'\, \psi(x',t) \, \delta(x - x') = \psi(x,t) .
}{b8}
Equations~\eqref{b2} and \eqref{b8} proves the inversion formula employed in \equasa{63}{64}.

In order to prove the Parseval-Plancherel formula, we write
\eqn{
\int_{-\infty}^{\infty} d x\, |\psi(x,t)|^2 = \frac{1}{\sqrt{2\pi\hbar}} \int_{-\infty}^{\infty} dx \left[  \int_{-\infty}^{\infty} dp\; \bar{\phi}(p,t) \, e^{- i p x / \hbar}  \right] \psi(x,t) \nonumber\\
= \int_{-\infty}^{\infty} dp\; \bar{\phi}(p,t) \left[   \frac{1}{\sqrt{2\pi\hbar}} \int_{-\infty}^{\infty} dx \, \psi(x,t)\, e^{- i p x / \hbar}  \right] = \int_{-\infty}^{\infty} dp\; \bar{\phi}(p,t) \, \phi(p,t) \nonumber\\
= \int_{-\infty}^{\infty} dp \, |\phi(p,t)|^2 ,
}{b9}
where \equasa{63}{64} are employed. Equation~\eqref{b9} finally proves the Parseval-Plancherel formula.

\vfill\eject
\section{}
\textbf{Suggested homework}
\\[0.5cm]
\emph{EXERCISE 1}\\
Consider a quantum particle in the following one--dimensional potential,
\[
V(x) = 
\begin{cases}
\infty, \quad &x < 0, \\
\lambda x, &x > 0 ,
\end{cases}
\]
where $\lambda$ is a positive constant.\\
(a) Determine the energy spectrum. Is it continuous or discrete?\\
(b) Plot the wave functions and the probability densities of the first 3 stationary states.\\
(c) Could $V(x)$ be a good representation of some physically interesting system?\\
\vspace{1cm}
~\\
\emph{EXERCISE 2}\\
Consider a quantum particle in the following one--dimensional potential,
\[
V(x) = 
\begin{cases}
\infty, \quad &x < 0, \\
0, &0 < x < a ,\\
V_0, &a < x < b ,\\
0, & x > b,
\end{cases}
\]
where $V_0$ is a positive constant, and $b - a > 0$ is the thickness of the potential barrier.\\
(a) Determine the energy spectrum. Is it continuous or discrete?\\
(b) Determine the transmission coefficient for $E < V_0$.\\
\vspace{1cm}
~\\
\emph{EXERCISE 3}\\
Consider a free quantum particle, i.e. a particle with Hamiltonian,
\[
\hat{H} = \frac{\hat{p}^2}{2 m} .
\]
Evaluate the commutator
\[
[ \hat{x}, \hat{H} ]
\]
and determine the lower bound on the product of the two uncertainties $\Delta x$ and $\Delta E$, where $\Delta x$ is the uncertainty associated with $\hat{x}$ and $\Delta E$ is the uncertainty associated with $\hat{H}$.\\
Comment on the result.

\vfill\eject

\end{document}